\newcommand{\eps}{\varepsilon}
\newcommand{\R}{\mathbb R}
\newcommand{\m}{\mathbf{m}}
\newcommand{\mpa}{m_\parallel}
\newcommand{\hpa}{h_\parallel}
\newcommand{\hpe}{\mathbf h_\perp}
\newcommand{\mpe}{\mathbf{m}_\perp}
\newcommand{\wmpa}{\widetilde{m}_\parallel}
\newcommand{\wmpe}{\widetilde{\mathbf{m}}_\perp}
\newcommand{\vv}{{\mathbf v}}
\newtheorem{theorem}{Theorem}
\newtheorem{corollary}[theorem]{Corollary}
\newtheorem{remark}[theorem]{Remark}
\numberwithin{equation}{section}
\title{Domain structure of ultrathin ferromagnetic elements in the
  presence of Dzyaloshinskii-Moriya interaction}
\author{Cyrill B. Muratov\footnote{Department of Mathematical
    Sciences, New Jersey Institute of Technology, Newark, NJ 07102,
    USA} \and Valeriy V. Slastikov\footnote{School of Mathematics,
    University of Bristol, Bristol BS8 1TW, United Kingdom}}
\date{\today}
\begin{document}

\maketitle

\begin{abstract}
  Recent advances in nanofabrication make it possible to produce
  multilayer nanostructures composed of ultrathin film materials with
  thickness down to a few monolayers of atoms and lateral extent of
  several tens of nanometers. At these scales, ferromagnetic materials
  begin to exhibit unusual properties, such as perpendicular
  magnetocrystalline anisotropy and antisymmetric exchange, also
  referred to as Dzyaloshinskii-Moriya interaction (DMI), because of
  the increased importance of interfacial effects. The presence of
  surface DMI has been demonstrated to fundamentally alter the
  structure of domain walls. Here we use the micromagnetic modeling
  framework to analyse the existence and structure of chiral domain
  walls, viewed as minimizers of a suitable micromagnetic energy
  functional. We explicitly construct the minimizers in the
  one-dimensional setting, both for the interior and edge walls, for a
  broad range of parameters. We then use the methods of
  $\Gamma$-convergence to analyze the asymptotics of the
  two-dimensional magnetization patterns in samples of large spatial
  extent in the presence of weak applied magnetic fields.
\end{abstract}

\section{Introduction}
\label{sec:introduction}

The exploding amount of today's digital data calls for revolutionary
new high density, fast and long-term information storage
solutions. Spintronics is one among the emerging fields of
nanotechnology offering a great promise for information technologies,
whereby information is carried and processed, using the electron spin
rather than its electric charge \cite{prinz98, zutic04, allwood05,
  bader10}. It brings about many opportunities for creating the next
generation of devices combining spin-dependent effects with
conventional charge-based electronics. Despite being a relatively new
field of applied physics, it has already firmly established its
presence in everyday life through the development of new magnetic
storage devices. The discovery of giant magnetoresistance (GMR), for
which A. Fert and P. Gr\"unberg were awarded the 2007 Nobel Prize in
Physics, allowed an ability to ``read'' the magnetization states of a
ferromagnet through electric resistance measurements. This effect has
been used in GMR-based spin valves, which transformed magnetic
hard-disk drive technology, leading to increases in storage density by
several orders of magnitude. Yet, the GMR magnetic storage technology
has already been superseded by novel spin-dependent devices based on
the effect of tunneling magnetoresistance (TMR), another exciting
development in the field of spintronics \cite{bader10}.

Recent discoveries of new physical phenomena that become prominent at
nanoscale open up a possibility of unprecedented data storage
densities and read/write speeds. These include spin transfer torque
(STT), chiral domain walls and magnetic skyrmions, spin Hall effect,
spin Seebeck effect, electric field control of the magnetic
properties, etc. (see, e.g., \cite{bader10, brataas12, fert13,
  nagaosa13, chen13, vonbergmann14, matsukura15}). The ability to
manipulate the magnetization, using electric currents suggests novel
designs for magnetic memory.  One popular concept is the so-called
racetrack memory \cite{bader10, parkin08}, which uses a
two-dimensional array of parallel nanowires where magnetic domains --
``bits'' -- may be read, moved, and written through an application of
a spin current. Another promising type of memory and logic devices is
based on storing and manipulating the data bits, using magnetic
skyrmions, rather than magnetic domain walls. The existence of
magnetic skyrmions was predicted theoretically more than twenty-five
years ago \cite{bogdanov89,bogdanov94}, but their experimental
observations are much more recent \cite{nagaosa13, heinze11,
  woo16}. The topological stability, small size and extremely low
currents and fields required to move magnetic skyrmions make them
natural candidates for the use in spintronic memory and logic devices
\cite{fert13, zhang15sr, woo16}.

A successful design of novel spintronic devices that make use of
magnetic domain walls or skyrmions is strongly dependent on a deep
theoretical understanding of static and dynamic behaviors of the
magnetization in magnetic nanostructures. The manipulation and control
of magnetic domain walls and topologically protected states (e.g.,
magnetic vortices and skyrmions) in ferromagnetic nanostructures has
been the subject of extensive experimental and theoretical research
(see, e.g., \cite{braun12, thiaville12, chen13, rohart13, goussev13,
  sampaio13, boulle13}; this list is certainly far from
complete). Recent advances in nanofabrication techniques
\cite{stepanova} have lead to the production of ultrathin films with
thickness down to several atomic layers and a lateral extent down to
tens of nanometers. These ultrathin magnetic films and multilayer
structures often exhibit unusual magnetic properties, attributed to an
increased importance of interfacial effects.  The most important
features of these ultrathin magnetic structures include the appearance
of perpendicular magnetic anisotropy \cite{heinrich93,ikeda10} and the
Dzyaloshinskii-Moriya interaction (DMI) \cite{dzyaloshinskii58,
  moriya60}. The latter is closely related to reflection symmetry
breaking in such films and leads to emergence of magnetization
chirality \cite{fert90, hrabec14, thiaville12}.

\begin{figure}[t]
  \centering
  \includegraphics[width=14cm]{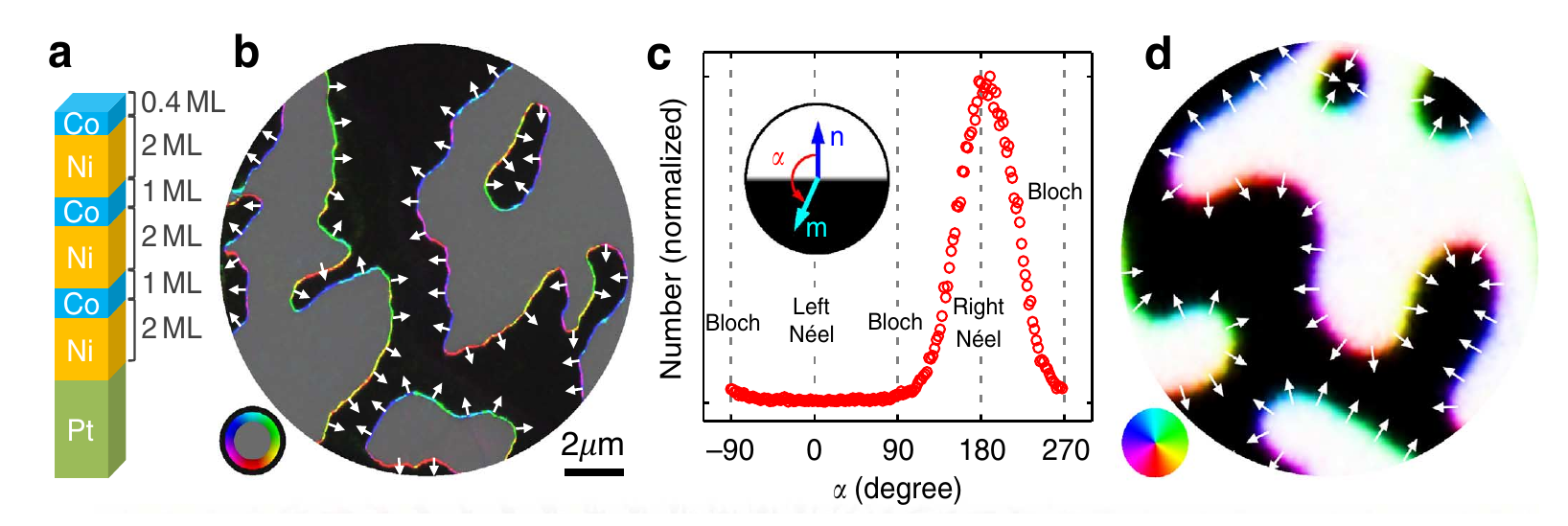}
  \caption{Experimental and numerical observations of chiral domain
    walls in ultrathin ferromagnetic films in the presence of DMI. (a)
    The schematics of the multilayer structure. (b) A colormap of the
    magnetization exhibiting chiral domain walls.  (c) A histogram of
    the in-plane magnetization orientation angle relative to the
    in-plane normal to the domain wall showing a preferred rotation
    direction. (d) A comparison to the result of a Monte-Carlo
    simulation of a discrete spin model. In (b), gray indicates the
    domains with the magnetization up, black indicates the domains
    with the magnetization down, and the rest of the colors correspond
    to the directions of the in-plane component, as shown in the
    color-wheel. Adapted from Ref.~\cite{chen13}, with permission; see
    that reference for further details. }
  \label{fig:expwalls}
\end{figure}

The experimental discovery of the symmetry breaking
Dzyaloshinskii-Moriya interaction in ferromagnetic multilayers has
generated a lot of interest in the physics community \cite{bode07,
  heinze11, romming13}. There has been a lot of work focusing on the
influence of DMI on magnetization configurations within a
ferromagnetic sample \cite{thiaville12, rohart13, bode07}. One of the
interesting features of DMI is its influence on the profile and the
dynamic properties of domain walls \cite{thiaville12, chen13,
  rohart13, emori13}. In addition, it is well-known that DMI may be
responsible for formation of magnetic skyrmions -- topologically
protected states with a quantized topological degree observed in
ultrathin films \cite{nagaosa13, melcher14}.  DMI also plays a crucial
role in defining the orientation of the domain walls and chiral
behavior of the magnetization inside the wall, leading to the
formation of a new type of {\it chiral domain walls}, also referred to
as the Dzyaloshinskii walls \cite{thiaville12}, having rather
different properties than the conventional Bloch and Neel walls
\cite{hubert}. For an illustration of chiral domain walls observed
experimentally and numerically, see Fig. \ref{fig:expwalls}.  In a
recent theoretical work \cite{rohart13}, it was reported that the
interplay between DMI and the boundary of an ultrathin ferromagnetic
sample is responsible for creating another type of domain walls --
{\it chiral edge domain walls}. These walls play a crucial role in
producing new types of magnetization patterns inside a
ferromagnet. For instance, in the presence of a transverse applied
field, chiral edge domain walls provide a mechanism for {\it tilting}
of an interior domain wall in a ferromagnetic strip
\cite{boulle13,mst16}. Moreover, they also significantly modify the
dynamic behavior of the interior domain wall under the action of
current and an applied field \cite{thiaville12}.

In this paper, we study chiral domain walls in ultrathin ferromagnetic
films, using rigorous analytical methods within the variational
framework of micromagnetics. Our goal is to understand the formation
of chiral interior domain walls and chiral edge domain walls, viewed
as local or global energy minimizing configurations of the
magnetization, in samples with perpendicular magnetocrystalline
anisotropy in the presence of surface DMI and weak applied magnetic
fields.  The multi-scale nature of the micromagnetic energy allows for
a variety of distinct regimes characterized by different relations
between the material and geometric parameters, and makes its
investigation a very challenging mathematical problem. Many of these
regimes have been investigated analytically, using modern techniques
of calculus of variations in the context of various ferromagnetics
nanostructures (see, e.g.,\cite{desimone06r}).

Our starting point is a reduced two-dimensional micromagnetic energy,
in which the stray field contributes only a local shape anisotropy
term to the leading order (see \eqref{E} below). This energy gives
rise to a non-convex vectorial variational problem, with a nontrivial
interplay between the boundary and the interior of the domain due to
the DMI term.  We seek to understand the formation and structure of
the domain walls -- transition layers between constant magnetization
states -- that correspond to minimizers of the micromagnetic
energy. The framework for this analysis is provided by the variational
methods of the gradient theory of phase transitions
\cite{modica87}. These types of problems have been extensively studied
in the mathematical community in both scalar \cite{modica87,
  modica87aihp, kohn89, owen90} and vectorial \cite{fonseca89,
  sternberg91} settings. The nontrivial influence of the boundary
within the gradient theory of phase transitions was investigated in
\cite{modica87aihp, owen90}.

We begin by investigating the one-dimensional problems on the infinite
and semi-infinite domains. Here we provide a complete analytical
solution for the global energy minimizers of these one-dimensional
problems, see Theorem~\ref{t:1d} and Theorem~\ref{t:edge},
respectively. Our main tool is a careful analysis of the case of
equality in the vectorial Modica-Mortola type lower bound for the
energy of one-dimensional magnetization configurations. Our analysis
yields explicit profiles for one-dimensional chiral interior and edge
domain walls. These optimal profiles are used later on in the
constructions for the full two-dimensional problem. Our
one-dimensional results confirm the physical intuition of
\cite{rohart13} for a slightly reduced range of the DMI constants.

We then investigate the full two-dimensional energy in the regime of
large domains and small applied fields, using methods of
$\Gamma$-convergence. After a rescaling, this amounts to a study of
the asymptotic behavior of the energy $E_\eps(\m)$ in \eqref{Eeps} as
$\eps \to 0$.  We note that our original problem is vectorial,
constrained ($|\m(x)|=1$), and the energy contains linear gradient
terms in the interior, as well as boundary terms (after integration by
parts), both coming from DMI. Even though the original problem is
vectorial -- and these are notoriously difficult phase transition
problems -- we show that one can reduce our problem to a scalar
setting by decoupling the behavior of the normal magnetization
component $\mpa$, preferring to be equal to $\pm1$, and the in-plane
component $\mpe$, preferring to be $0$, outside the transition layer
and proving that the optimal configuration of $\mpe$ is a function of
$\mpa$ and the layer orientation. This nontrivial observation
significantly simplifies the analysis of the problem and allows us to
use the methods developed in \cite{modica87aihp, owen90} to obtain the
$\Gamma$-limit of the family of micromagnetic energies. The rest of
the proof follows the pattern of the gradient theory of phase
transitions \cite{modica87}, with some modifications to account for
the vectorial and constrained nature of the problem.

With the above tools, we obtain the $\Gamma$-limit, given by
\eqref{E0}, of the family of energies in \eqref{Eeps} with respect to
the $L^1$ convergence of $\mpa^\eps$. The limit energy is geometric,
and its minimizers determine the locations of the chiral domain walls,
which are now curves separating the regions in which $\mpa^0$ changes
sign. As a consequence, we also obtain an asymptotic characterization
of the energy minimizers of $E_\eps$ as $\eps \to 0$. Our main result,
stated in Theorem~\ref{t:Gamma}, indicates that the presence of DMI
significantly modifies the magnetization behavior in ultrathin
magnetic films by creating both interior and edge chiral domain walls.

The paper is organized as follows. In
section~\ref{sec:probl-one-dimens}, we present the solution of the
one-dimensional global energy minimization problem for both the
interior and boundary chiral domain walls. Then, in
section~\ref{sec:two-dimens-probl0} we investigate the full
two-dimensional energy \eqref{E} in the regime of large domains and
small applied fields and study the behavior of the family of
micromagnetic energies in \eqref{Eeps} in the limit as $\eps \to 0$.
Finally, in section~\ref{sec:discussion} we summarize our findings and
discuss several additional modeling aspects of our problem, together
with some possible extensions of our analysis.

\section{Model}
\label{sec:model}

We start by considering a ferromagnetic film of thickness $d$
occupying the spatial domain $\Omega \times (0,d) \subset \R^3$, where
$\Omega \subseteq \R^2$ is a two-dimensional domain specifying the
shape of the ferromagnetic element. Within the micromagnetic framework
\cite{hubert}, the magnetization in the sample is described by the
vector $\mathbf M = \mathbf M(x, y, z)$ of constant length
$|\mathbf M| = M_s$, where $M_s$ is referred to as the saturation
magnetization. The micromagnetic energy in the presence of an
out-of-plane uniaxial anisotropy and an interfacial
Dzyaloshinskii-Moriya interaction (DMI) may be written in the SI units
in the form \cite{bogdanov89,bogdanov94,thiaville12}
\begin{align}
  \label{Ephys}
  E(\mathbf M) 
  & = {A \over M_s^2} \int_{\Omega \times (0,d)}
    |\nabla \mathbf M|^2 \, d^3 r +  {K \over M_s^2} \int_{\Omega \times
    (0,d)} |\mathbf M_\perp|^2 d^3 r  - \mu_0 \int_{\Omega \times (0,d)}
    \mathbf M \cdot \mathbf H \, d^3 r \notag \\
  & \quad + \mu_0 \int_{\R^3} \int_{\R^3}
    {\nabla \cdot \mathbf M(\mathbf r) \, \nabla \cdot \mathbf
    M(\mathbf r') \over 8 \pi | \mathbf r - \mathbf r'|} \, d^3 r
    \, d^3 r' + {D d \over M_s^2} \int_\Omega \Big(
    \overline{M}_\parallel \nabla \cdot \overline{\mathbf 
    M}_\perp - \overline{\mathbf M}_\perp \cdot \nabla
    \overline{M}_\parallel \Big) d^2 r.    
\end{align}
Here we wrote $\mathbf M = (\mathbf M_\perp, M_\parallel)$, where we
defined $\mathbf M_\perp \in \R^2$ and $M_\parallel \in \R$ to be the
components of the magnetization vector $\mathbf M$ that are
perpendicular and parallel to the material easy axis (the $z$-axis),
respectively, and introduced $\overline{\mathbf M}$ which is the trace
of $\mathbf M$ on $\Omega \times \{0\}$. In \eqref{Ephys}, $A$ is the
exchange stiffness, $K$ is the magnetocrystalline anisotropy constant,
$\mathbf M$ has been extended by zero outside the sample and
$\nabla \cdot \mathbf M$ is understood distributionally in $\R^3$,
$\mu_0$ is the permeability of vacuum,
$\mathbf H = \mathbf H(x, y, z)$ is the applied magnetic field, and
$D$ is the Dzyaloshinskii-Moriya interaction constant, following the
standard convention to write $D$ in the units of energy per unit
area. In writing the DMI term in this specific form, we took into
account that it arises as a contribution from the interface between
the magnetic layer and a non-magnetic material and should, therefore,
enter as a boundary term in the full three-dimensional theory.

In the above framework, the equilibrium magnetization configurations
in the ferromagnetic sample correspond to either global or local
minimizers of a non-local, non-convex energy functional in
\eqref{Ephys}. This energy includes several terms, in order of
appearance: the exchange term, which prefers constant magnetization
configurations; the magnetocrystalline anisotropy, which favors
out-of-plane magnetization configurations; the Zeeman, or applied field
term, which prefers magnetizations aligned with the external field;
the magnetostatic term, which prefers divergence-free configurations;
and the surface DMI term, which favors chiral symmetry breaking. The
origin of the latter is the antisymmetric exchange mediated by the
spin-orbit coupling in the conduction band of a heavy metal at the
ferromagnet-metal interface \cite{fert80,fert90,crepieux98}.

The variational problem associated with \eqref{Ephys} poses a
significant challenge for analysis. Therefore, in the following we
introduce a simplified version of the energy in \eqref{Ephys} that is
suitable for ultrathin ferromagnetic films of thickness
$d \lesssim \ell_{ex} = \sqrt{2 A / (\mu_0 M_s^2)}$, where $\ell_{ex}$
is the material exchange length. In this case a two-dimensional model
is appropriate in which the stray field energy can be modeled by a
local shape anisotropy term (see, e.g., \cite{gioia97}; for a more
thorough mathematical discussion of the stray field effect in
ultrathin films with perpendicular anisotropy, see
\cite{kmn16}). Measuring the lengths in the units of $\ell_{ex}$ and
the energy in the units of $A d$, we can rewrite the energy associated
with the magnetization configuration
$\mathbf M(x, y, z) = M_s \m(x, y)$, where
$\m : \Omega \to \mathbb S^2$, as
\begin{align}
  \label{E}
  E(\m) = \int_\Omega \Big\{ |\nabla \m|^2 + (Q
  - 1) |\mpe|^2 - 2 \hpa \mpa - 2 \hpe \cdot \mpe \notag \\
  + \kappa \left( \mpa \nabla \cdot \mpe -
  \mpe \cdot \nabla \mpa \right) \Big\} \, d^2 r,
\end{align}
where we defined $\mpe \in \R^2$ and $\mpa \in \R$ to be the
respective components of the unit magnetization vector $\m$ and
introduced the dimensionless quality factor $Q$ and the dimensionless
DMI strength $\kappa$:
\begin{align}
  \label{Qkappa}
  Q = {2 K \over \mu_0 M_s^2}, \qquad \kappa = D \sqrt{2 \over \mu_0
  M_s^2 A},
\end{align}
where $D$ is the DMI constant \cite{thiaville12}. In \eqref{E}, we
also introduced a dimensionless applied magnetic field
$\mathbf h = (\hpe, \hpa) = \mathbf H / M_s$, with $\hpe \in \R^2$ and
$\hpa \in \R$.

We are interested in the regime in which the film favors
magnetizations that are normal to the film plane, i.e., when $Q > 1$.
Also, since the energy is invariant with respect to the transformation
\begin{align}
  \label{kappam}
  \kappa \to -\kappa, \qquad \mpe \to -\mpe,  \qquad \hpe \to -\hpe, 
\end{align}
without loss of generality we can assume $\kappa$ to be positive.

\section{The problem in one dimension}
\label{sec:probl-one-dimens}

We begin by considering an idealized situation in which the
ferromagnetic film occupies either the whole plane or a half-plane,
which leads to two basic types of domain walls considered below (see
Fig. \ref{fig:walls}). These are the magnetization configurations that
vary in one direction only. In the case of the half-plane, the
magnetization is also assumed to vary in the direction normal to the
film edge. Throughout this section, we set the applied magnetic field
$\mathbf h$ to zero.

\begin{figure}
  \centering
  \includegraphics[width=14cm]{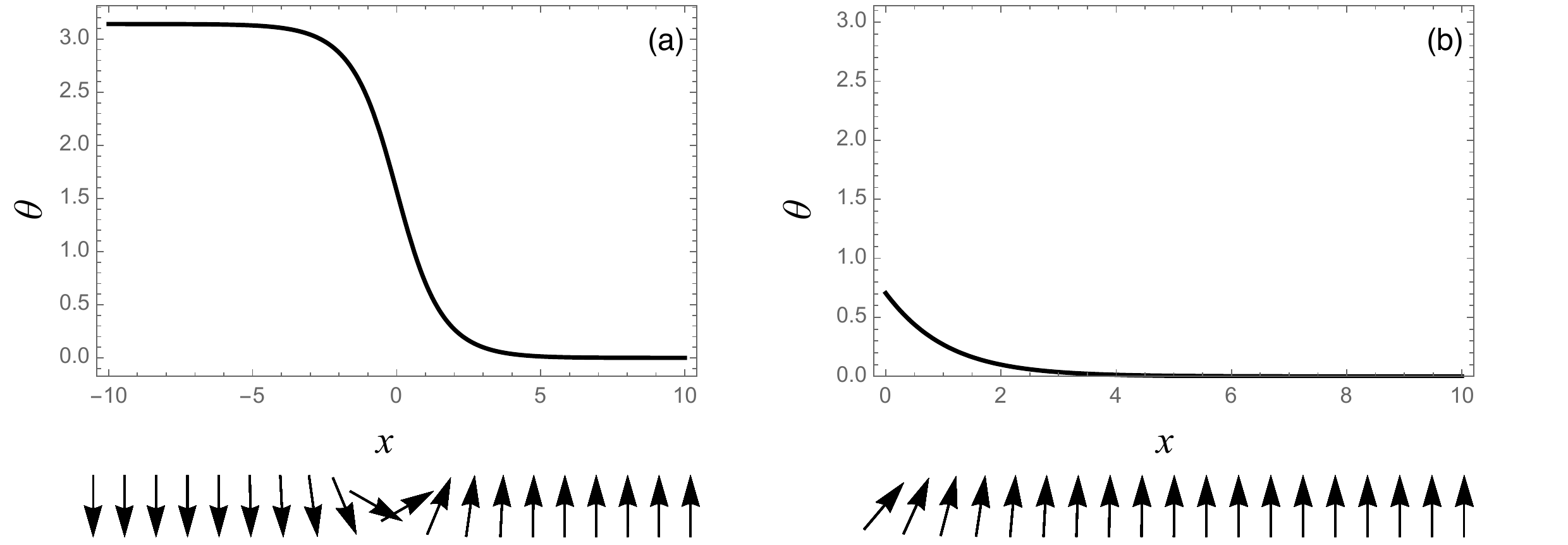}
  \caption{Two types of one-dimensional domain walls due to DMI: (a)
    interior wall; (b) edge wall. In the upper panels, $\theta$ stands
    for the angle between $\m$ and the $z$-axis. The vector $\m$
    rotates in the $xz$-plane (lower panels).}
  \label{fig:walls}
\end{figure}

\subsection{Interior wall}
\label{sec:interior-wall}

Consider first the whole space situation, in which case we may assume
that
\begin{align}
  \label{Om1d}
  \Omega = \{(x, y) \in \R^2 \ :  \ x \in \R, \ \ 0 < y < 1 \},
\end{align}
with periodic boundary conditions at $y = 0$ and $y = 1$.  We then
take $\m$ to be a one-dimensional profile, i.e., $\m = \m(x)$. Then we
may write the energy associated with $\m$ in the form
\begin{align}
  \label{E1d}
  E(\m) = \int_{-\infty}^\infty \Big\{ |\m'|^2 + (Q - 1) |\mpe|^2 + \kappa
  \left( \mpa (\hat x \cdot \mpe)' -  
  (\hat x \cdot \mpe)  \mpa' \right) \Big\} \, dx,
\end{align}
where primes denote the derivative with respect to the $x$ variable
and $\hat x$ is the unit vector in the direction of the $x$-axis. We
are interested in the global energy minimizers of the energy in
\eqref{E1d} that obey the following conditions at infinity:
\begin{align}
  \label{minf}
  \lim_{x \to \pm\infty} \mpa(x) = \pm 1, \qquad \lim_{x \to \pm \infty}
  \mpe(x) = 0. 
\end{align}

On heuristic grounds, one expects that the optimal domain wall profile
has the form of the {\em Dzyaloshinskii wall}
\cite{thiaville12}. Namely, one expects that in the domain wall the
magnetization rotates around the direction of the $y$-axis. Hence,
introducing an ansatz 
\begin{align}
  \label{manz}
  \m = (\sin \theta, 0, \cos \theta),   
\end{align}
one can rewrite the energy in \eqref{E1d} as \cite{rohart13}
\begin{align}
  \label{Eth}
  E(\m) = \int_{-\infty}^\infty \Big\{ |\theta'|^2 + (Q - 1) \sin^2
  \theta + \kappa \theta' \Big\} dx.
\end{align}

Observe, however, that a priori the energy in \eqref{Eth} is not well
defined in the natural class of $\theta \in H^1_{loc}(\R)$, since the
last term in the energy is not sign definite and does not necessarily
make sense as the Lebesgue integral on the whole real line. This fact
is closely related to the chiral nature of DMI, favoring oscillations
of the magnetization vector. A simple counterexample, in which the
first two terms of the energy in \eqref{Eth} are well defined, while
the last one is not, is given by the function
$\theta(x) = {\pi \over 2} - \mathrm{Si}(x)$, where
$\mathrm{Si}(x) = \int_0^x t^{-1} \sin t \, dt$ is the sine integral
function. It is also worth noting that if one were to define the
energy in \eqref{Eth} as the limit of the energies on large finite
domains, then its minimum value would be strictly greater than that
obtained from the integral on the whole real line due to the presence
of edge domain walls \cite{rohart13} (see also
Sec. \ref{sec:edge-domain-wall} for further details).

To fix the issue above, one needs to assume that
$\theta' \in L^1(\R)$, which introduces a bound on the total variation
of $\theta$ on $\R$. This, in turn, implies that the limit of
$\theta(x)$ as $x \to \pm \infty$ exists, and the last term in
\eqref{Eth} becomes a boundary term. Furthermore, in order for the
energy to be bounded the limits of $\theta(x)$ at infinity must be
integer multiples of $\pi$, and without loss of generality we may
assume
\begin{align}
  \label{thinf}
  \lim_{x \to -\infty} \theta(x) = \pi n, \qquad \lim_{x \to +\infty}
  \theta(x) = 0, \qquad \ n \in \mathbb Z.
\end{align}
The energy then becomes
\begin{align}
  \label{Eth2}
  E(\m) = \int_{-\infty}^\infty \Big\{ |\theta'|^2 + (Q - 1) \sin^2
  \theta \Big\} dx - \kappa \pi n, 
\end{align}
for $\theta \in H^1_{loc}(\R)$ with $ \theta' \in L^1(\R)$ and
$\theta$ obeying \eqref{thinf}, with $n \not = 0$ to exclude the
trivial case.

It is easy to see that the energy in \eqref{Eth2} is uniquely
minimized in the above class if and only if $n = 1$ and
$\kappa < \kappa_c$, where
\begin{align}
  \label{kappac}
  \kappa_c = {4 \sqrt{Q - 1} \over \pi}.
\end{align}
In this case the optimal profile is, up to translations, given by
\cite{rohart13}
\begin{align}
  \label{thmin}
  \theta(x) = 2 \arctan e^{-x \sqrt{Q-1}},
\end{align}
and the wall energy is given by
\begin{align}
  \label{Ewall}
  \sigma_{wall} = 4 \sqrt{Q - 1} - \pi \kappa > 0.
\end{align}
Indeed, minimizers of \eqref{Eth2} with $n = \pm 1$ among all
admissible $\theta$ are well known to exist due to the good coercivity
and lower semicontinuity properties of those terms (for technical
details in a related problem, see \cite{cm:non13}). The profile in
\eqref{thmin} is then the unique solution, up to translations and
sign, of the Euler-Lagrange equation associated with \eqref{Eth2}
satisfying \eqref{thinf}. At the same time, for $|n| \geq 2$ the
energy is easily seen to satisfy $E(\theta) \geq |n|
\sigma_{wall}$.
Hence, by inspection the minimizer with $n = +1$ corresponds to the
global minimizer for all $n \not = 0$, with the sign of $n$
corresponding to the wall chirality imparted by DMI.

We remark that, in contrast to the above situation, the problem
associated with \eqref{E1d} does not admit minimizers for
$\kappa > \kappa_c$, since in this case the energy is not bounded
below and favors helical structures \cite{rohart13}.

The following theorem establishes existence and uniqueness of the
minimizers of the one-dimensional domain wall energy in \eqref{E1d}
among all profiles satisfying \eqref{minf} {\em without} assuming the
ansatz in \eqref{manz}.  In view of the discussion above, an
appropriate admissible class for the energy is given by
\begin{align}
  \label{A1d}
  \mathcal A = \left\{ \m \in H^1_{loc}(\mathbb R; \mathbb S^2) \ : \ \m' 
  \in L^1(\R; \R^3) \right\}.
\end{align}
The theorem below confirms the expectation that the domain wall profile is
given by \eqref{manz} and \eqref{thmin} for all $\kappa$ below a
critical value, although the latter turns out to be slightly lower
than the expected threshold value of $\kappa = \kappa_c$ given by
\eqref{kappac}.

\begin{theorem}
  \label{t:1d}
  Let $0 < \kappa < \sqrt{Q - 1}$. Then there exists a unique, up to
  translations, minimizer $\m \in \mathcal A$ of \eqref{E1d}
  satisfying \eqref{minf}. The minimizer $\m$ has the form in
  \eqref{manz} with $\theta$ given by \eqref{thmin}, and the minimal
  energy is given by $\sigma_{wall}$ from \eqref{Ewall}.
\end{theorem}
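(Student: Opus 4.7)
The plan is to establish the sharp lower bound $E(\m) \geq \sigma_{wall}$ for every $\m \in \mathcal A$ satisfying \eqref{minf}, with equality characterizing the Dzyaloshinskii profile, by combining an integration by parts in the DMI term with a vectorial Modica--Mortola estimate and a scalar coarea argument. I first integrate by parts in the DMI: since $\m' \in L^1$ and $\m$ satisfies \eqref{minf}, the boundary contributions vanish and the DMI reduces to $-2\kappa\int_{-\infty}^\infty (\hat x\cdot\mpe)\,\mpa'\,dx$.

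Next I exploit the unit-length constraint $|\m|=1$ twice. Differentiating it gives $\mpa\mpa' = -\mpe\cdot\mpe'$, and Cauchy--Schwarz then yields $|\mpe'|^2 \geq \mpa^2(\mpa')^2/(1-\mpa^2)$; an AM--GM argument produces the Modica--Mortola-type estimate
\begin{equation*}
|\m'|^2 + (Q-1)|\mpe|^2 \ \geq\ \frac{(\mpa')^2}{1-\mpa^2} + (Q-1)(1-\mpa^2) \ \geq\ 2\sqrt{Q-1}\,|\mpa'|.
\end{equation*}
Meanwhile $|\hat x\cdot\mpe| \leq |\mpe| = \sqrt{1-\mpa^2}$ gives $-2\kappa(\hat x\cdot\mpe)\,\mpa' \geq -2\kappa\sqrt{1-\mpa^2}\,|\mpa'|$. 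Summing, and using the hypothesis $\kappa<\sqrt{Q-1}$ to keep the coefficient on $|\mpa'|$ nonnegative, I obtain
\begin{equation*}
E(\m) \ \geq\ 2\int_{-\infty}^\infty |\mpa'|\bigl(\sqrt{Q-1}-\kappa\sqrt{1-\mpa^2}\bigr)\,dx.
\end{equation*}
Because $\mpa$ is continuous and connects $-1$ at $-\infty$ to $+1$ at $+\infty$, it covers each $t\in[-1,1]$ at least once, so the one-dimensional area formula yields
\begin{equation*}
E(\m)\ \geq\ \int_{-1}^{1} 2\bigl(\sqrt{Q-1}-\kappa\sqrt{1-t^2}\bigr)\,dt\ =\ 4\sqrt{Q-1}-\pi\kappa\ =\ \sigma_{wall}.
\end{equation*}

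Tracing the four places where equality can be lost determines the minimizer uniquely up to translation: the area formula is sharp iff $\mpa$ is strictly monotone; AM--GM is sharp iff $\mpa'=\sqrt{Q-1}(1-\mpa^2)$, which integrates to $\mpa(x)=\tanh(\sqrt{Q-1}(x-x_0))$ (equivalent to \eqref{thmin} via $\mpa=\cos\theta$); the Cauchy--Schwarz step is sharp iff $\mpe$ points in a constant in-plane direction on $\{|\mpa|<1\}$; and the DMI bound is sharp iff $\mpe=\sqrt{1-\mpa^2}\,\hat x$, so that $\m$ lies in the $xz$-plane. Together these force \eqref{manz}--\eqref{thmin}, and existence is automatic since this explicit profile attains the bound. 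The main technical obstacle is the degeneracy of $(\mpa')^2/(1-\mpa^2)$ where $|\mpa|=1$ and rigorously invoking the area formula for the $H^1_{loc}$ function $\mpa$; I would handle these by localizing the pointwise estimates to $\{|\mpa|<1\}$ (whose complement contributes nothing since the constraint forces $\mpa'=0$ a.e.\ there) and appealing to the classical one-dimensional area formula, applicable because $\mpa$ is BV by virtue of $\mpa' \in L^1$.
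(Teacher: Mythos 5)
Your proposal is correct and follows essentially the same route as the paper's own proof: integration by parts in the DMI term, the chain-rule/Cauchy--Schwarz reduction to a scalar functional of $\mpa$ (localized to $\{|\mpa|<1\}$ to handle the degeneracy), a Modica--Mortola lower bound whose coefficient is nonnegative precisely because $\kappa<\sqrt{Q-1}$, and a trace of the equality cases to pin down the profile. The only cosmetic difference is that you close the lower bound with the one-dimensional area formula, whereas the paper replaces $|\mpa'|$ by $\mpa'$ and integrates the exact derivative of the antiderivative $\Phi(\mpa)$; the two computations are equivalent.
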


\begin{proof}
  The proof proceeds by showing directly that the profile given by
  \eqref{manz} and \eqref{thmin} is the unique minimizer via
  establishing a sharp lower bound for the energy. Assume without loss
  of generality that $E(\m) < +\infty$. Then by dominated convergence
  theorem we have
  \begin{align}
    E(\m) = \int_{-\infty}^\infty \Big( |\m'|^2 +
    (Q - 1) |\mpe|^2 \Big) dx + \kappa \lim_{R \to \infty} \int_{-R}^R
    \left( \mpa (\hat x \cdot \mpe)' -  
    (\hat x \cdot \mpe)  \mpa' \right) dx,
  \end{align}
  and $|\mpe(x)| \to 0$ as $x \to \pm \infty$ \cite[Corollary
  8.9]{brezis}.  Using integration by parts \cite[Corollary
  8.10]{brezis}, the last integral may be rewritten as
  \begin{align}
    \int_{-R}^R \left( \mpa (\hat x \cdot \mpe)' - (\hat x \cdot \mpe)
    \mpa' \right) dx =  (\hat x \cdot \mpe(x)) \mpa(x)
    \bigg|_{-R}^R - 2 \int_{-R}^R  (\hat x \cdot \mpe) \, 
    \mpa' \, dx.
  \end{align}
  Therefore, passing to the limit we obtain that
  \begin{align}
    \label{E1d2}
    E(\m) = \int_{-\infty}^\infty \Big( |\m'|^2 + (Q - 1) |\mpe|^2 -2
    \kappa (\hat x \cdot \mpe)  \mpa' \Big) dx.
  \end{align}

  We now trivially estimate the DMI term from below to obtain
  \begin{align}
    \label{E1d3}
    E(\m) \geq \int_{-\infty}^\infty \Big( |\m'|^2 + (Q - 1) |\mpe|^2 -2
    \kappa \, |\mpe| \,  |\mpa'| \Big) dx.
  \end{align}
  Next, we use the standard trick \cite{kohn07iciam} to estimate the
  exchange energy by the term involving only $|\mpa'|$. In the
  following, we spell out the details of the argument, paying special
  attention to the optimality of the obtained estimates.  We start by
  applying the weak chain rule \cite[Proposition 9.5]{brezis} to the
  identity $|\mpe|^2 + \mpa^2 = 1$. This yields: 
  \begin{align}
    \label{chain}
    \mpa^2 |\mpa'|^2 = |\mpe \cdot \mpe'|^2 \leq |\mpe|^2 |\mpe'|^2
    \qquad \text{for a.e.} \  x \in \R.
  \end{align}
  Therefore, for a.e. $x \in \R$ such that $|\mpa| < 1$ we can write
  \begin{align}
    \label{chain2}
    {\mpa^2 |\mpa'|^2 \over 1 - \mpa^2} \leq |\mpe'|^2.
  \end{align}
  Thus
  \begin{align}
    \label{radu}
    \int_{-\infty}^\infty |\m'|^2 \, dx =  \int_{-\infty}^\infty
    \left( |\mpe'|^2 + |\mpa'|^2 \right) \, dx \geq \int_{\{|\mpa| <
    1\}}  {|\mpa'|^2 \over 1 - \mpa^2} \, dx.
  \end{align}

  Writing the lower bound for the energy in terms of $\mpa$, with the
  help of \eqref{E1d3} and \eqref{radu} we obtain
  \begin{align}
    E(\m) \geq \int_{\{|\mpa| <
    1\}} \left(  {|\mpa'|^2 \over 1 - \mpa^2} + (Q - 1) (1 - \mpa^2)
    \right) \, dx - 2 \kappa \int_{-\infty}^\infty \, \sqrt{1 -
    \mpa^2} \, |\mpa'| \, dx. 
  \end{align}
  This inequality may be rewritten in the following Modica-Mortola
  type form
  \begin{align}
    \label{MM}
    E(\m) 
    & \geq 
      2 \int_{-\infty}^\infty \left(  \sqrt{Q - 1} - \kappa \sqrt{1 -
      \mpa^2}  \, \right) |\mpa'| \, dx \notag \\ 
    & \quad + \int_{\{|\mpa| <
      1\}} \left(  {|\mpa'| \over \sqrt{1 - \mpa^2}} - \sqrt{(Q - 1)
      (1 - \mpa^2)} \right)^2 \, dx,
  \end{align}
  where we extended the domain of integration in the first term to the
  whole real line in view of the fact that by \eqref{chain} we have
  $\mpa' = 0$ whenever $|\mpa| = 1$.

  We now turn to showing that the energy is minimized by the profile
  given by \eqref{manz} with $\theta$ given by \eqref{thmin}. Indeed,
  from \eqref{MM} we have for any $R > 0$:
  \begin{align}
    \label{E1dlb}
    E(\m) & \geq 2 \int_{-R}^R \left(  \sqrt{Q - 1} - \kappa \sqrt{1 -
            \mpa^2}  \, \right) |\mpa'| \, dx \notag \\
          & \geq  2 \int_{-R}^R \left( \sqrt{Q - 1} - \kappa \sqrt{1 - \mpa^2}
            \, \right) \mpa' \, dx \notag \\
          & = \left\{ 2 \mpa(x) \sqrt{Q - 1} - \kappa
            \left(\mpa(x) \sqrt{1-\mpa^2(x)} +\arcsin(\mpa(x)) \right)
            \right\}  \bigg|_{-R}^R,
  \end{align}
  where we used the assumption that $\kappa < \sqrt{Q - 1}$ to go from
  the first to the second line. Finally, passing to the limit as
  $R \to \infty$ and using \eqref{minf}, we obtain
  \begin{align}
    E(\m) \geq \sigma_{wall},
  \end{align}
  where $\sigma_{wall}$ is defined in \eqref{Ewall}. At the same time, by
  the computation at the beginning of this section the inequality
  above is an equality when $\m$ is given by \eqref{manz} with
  $\theta$ from \eqref{thmin}.

  It remains to prove that the profile given by \eqref{manz} with
  $\theta$ from \eqref{thmin} is the unique, up to translations,
  minimizer of the energy that satisfies \eqref{minf}. Without loss of
  generality, we may assume that $\mpa(0) = 0$, in view of the
  continuity of $\mpa(x)$ and \eqref{minf}. Since the minimal value of
  the energy is attained by dropping the last term in \eqref{MM} and
  replacing $|\mpa'|$ with $\mpa'$, we have $\mpa'(x) \geq 0$ for
  a.e. $x \in \R$, and $\mpa$ satisfies
  \begin{align}
    \label{mpaopt}
    \mpa' = \sqrt{Q - 1} (1 - \mpa^2) \qquad \text{for a.e.} \ x \in
    I, 
  \end{align}
  where $I = (a, b)$ with $-\infty \leq a < 0 < b \leq \infty$. Since
  the right-hand side of \eqref{mpaopt} is continuos, $\mpa$ is the
  unique classical solution of \eqref{mpaopt} that satisfies
  $\mpa(0) = 0$, which is explicitly
  $\mpa(x) = \tanh (x \sqrt{Q - 1} \, )$. Lastly, the inequality in
  \eqref{chain} becomes equality when $\mpe'$ is parallel to $\mpe$
  and, hence, $\mpe = g \mathbf b$ for some constant vector
  $\mathbf b \in \R^2$ and a scalar function $g : \R \to [-1,1]$. In
  turn, to make an inequality in \eqref{E1d3} an equality, one needs
  to choose $\mathbf b = \hat x$ and $g \geq 0$. In view of the unit
  length constraint for $|\m|$, this translates into
  $\mpe = \hat x \, \mathrm{sech}^2 (x \, \sqrt{Q - 1})$. The obtained
  profile $\m = (\mpe, \mpa)$ is then precisely the one given by
  \eqref{manz} with $\theta$ from \eqref{thmin}.
\end{proof}

We note that the arguments in the proof of Theorem \ref{t:1d} do not
carry over to the range $\sqrt{Q - 1} < \kappa \leq \kappa_c$, since
in this range we can no longer reduce the energy by passing to the
configurations in the form given by \eqref{manz}. Nevertheless, an
inspection of the proof shows that the statement of Theorem \ref{t:1d}
remains true for all $\m = (\mpe, \mpa)$ such that $\mpa(x)$ is a
non-decreasing function of $x$. Hence, we have the following result.

\begin{theorem}
  \label{t:1dm}
  For any $\kappa > 0$, there exists a unique, up to translations,
  minimizer of \eqref{E1d} among all
  $\m = (\mpe, \mpa) \in \mathcal A$ satisfying \eqref{minf} and
  $\mpa' \geq 0$. The minimizer $\m$ has the form in \eqref{manz} with
  $\theta$ given by \eqref{thmin}, and the minimal energy is given by
  $\sigma_{wall}$ from \eqref{Ewall}.
\end{theorem}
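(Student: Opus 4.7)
The plan is to revisit the proof of Theorem \ref{t:1d} and observe that the only step where the hypothesis $\kappa < \sqrt{Q-1}$ was used in an essential way is the first inequality in \eqref{E1dlb}, where one passes from $|\mpa'|$ to $\mpa'$ under the integral. That passage required the coefficient $\sqrt{Q-1} - \kappa\sqrt{1 - \mpa^2}$ to be non-negative pointwise, since in general $c |\mpa'| \ge c \mpa'$ only when $c \ge 0$. Under the monotonicity constraint $\mpa' \ge 0$, however, $|\mpa'| = \mpa'$ almost everywhere and the step is automatic, regardless of the sign of the coefficient. This is the only modification needed, so the rest of the proof of Theorem \ref{t:1d} transfers verbatim to arbitrary $\kappa > 0$.

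Concretely, I would first reproduce the derivation of the Modica-Mortola type lower bound \eqref{MM} exactly as in the proof of Theorem \ref{t:1d}; none of those manipulations use a bound on $\kappa$. Dropping the non-negative squared term in \eqref{MM} and using $|\mpa'| = \mpa'$ by monotonicity, I obtain
\begin{align*}
E(\m) \ge 2 \int_{-\infty}^\infty \left( \sqrt{Q-1} - \kappa \sqrt{1 - \mpa^2} \right) \mpa' \, dx.
\end{align*}
This is an integral of an exact derivative, so I would evaluate it using the antiderivative already computed in \eqref{E1dlb} and pass to the limit $R \to \infty$, invoking \eqref{minf}, to conclude $E(\m) \ge \sigma_{wall}$. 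Equivalently, the change of variables $u = \mpa(x)$ (valid since $\mpa$ is continuous and monotone from $-1$ to $1$) gives the explicit evaluation $2\int_{-1}^{1}(\sqrt{Q-1} - \kappa\sqrt{1-u^2}) \, du = 4\sqrt{Q-1} - \pi \kappa = \sigma_{wall}$. Attainment is immediate: the Dzyaloshinskii profile \eqref{manz} with $\theta$ from \eqref{thmin} has $\mpa(x) = \tanh(x\sqrt{Q-1})$, which is monotone non-decreasing and realises $\sigma_{wall}$ by the direct computation on \eqref{Eth} recorded before Theorem \ref{t:1d}.

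For uniqueness, I would mirror the closing argument of Theorem \ref{t:1d}. Equality throughout forces the squared integrand in \eqref{MM} to vanish, so $\mpa$ satisfies the ODE $\mpa' = \sqrt{Q-1}(1-\mpa^2)$ on $\{|\mpa| < 1\}$; together with the normalisation $\mpa(0)=0$ (fixing the translation) and \eqref{minf}, this determines $\mpa(x) = \tanh(x\sqrt{Q-1})$ uniquely. Equality in \eqref{chain} forces $\mpe'$ parallel to $\mpe$ and hence $\mpe = g \, \mathbf{b}$ for a constant unit vector $\mathbf{b} \in \R^2$ and a scalar $g$, while equality in the DMI bound \eqref{E1d3} forces $\mathbf{b} = \hat x$ and $g \ge 0$; the unit length constraint then fixes $\mpe$. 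There is no genuine new obstacle: the conceptual content is simply that the monotonicity constraint $\mpa' \ge 0$ serves as a substitute for the pointwise sign condition $\sqrt{Q-1} \ge \kappa \sqrt{1-\mpa^2}$ that was used in Theorem \ref{t:1d} to control the DMI contribution.
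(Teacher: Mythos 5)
Your proposal is correct and follows exactly the route the paper intends: the paper proves Theorem \ref{t:1dm} by the one-line observation that the hypothesis $\kappa < \sqrt{Q-1}$ enters the proof of Theorem \ref{t:1d} only in the passage from $|\mpa'|$ to $\mpa'$ in \eqref{E1dlb}, which the constraint $\mpa' \geq 0$ renders automatic. Your write-up simply spells out this inspection (including the evaluation $2\int_{-1}^{1}(\sqrt{Q-1}-\kappa\sqrt{1-u^2})\,du = \sigma_{wall}$ and the equality analysis), with no substantive difference from the paper's argument.
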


\begin{remark}
  We point out that due to the presence of the edge domain walls (see
  the following subsection) the minimizers of the energy in \eqref{E}
  in the form of a Dzyaloshinskii wall on a strip
  $\Omega = \R \times (0, L)$ are not one-dimensional for any $L > 0$.
  Nevertheless, if one assumes periodic boundary conditions instead of
  the natural boundary conditions at the edges of the strip, an
  examination of the proof of Theorem \ref{t:1d} shows that the global
  minimizer is still given by \eqref{manz} and \eqref{thmin} in this
  case.
\end{remark}

\subsection{Edge wall}
\label{sec:edge-domain-wall}

Consider now the half-plane situation, in which case we may assume
that
\begin{align}
  \label{Om1dp}
  \Omega = \{(x, y) \in \R^2 \ :  \ x > 0, \ \ 0 < y < 1 \},
\end{align}
with periodic boundary conditions at $y = 0$ and $y = 1$.  Taking $\m$
to be a one-dimensional profile, i.e., $\m = \m(x)$, we write
\begin{align}
  \label{E12dp}
  E(\m) = \int_0^\infty \Big\{ |\m'|^2 + (Q - 1) |\mpe|^2 + \kappa
  \left( \mpa (\hat x \cdot \mpe)' -  
  (\hat x \cdot \mpe)  \mpa' \right) \Big\} \, dx,
\end{align}
where, as before, $\hat x$ is the unit vector in the direction of the
$x$-axis. Once again, in order for this energy to be bounded, we must
have $|\mpe(x)| \to 0$ as $x \to \infty$.  Hence, in view of the
symmetry
\begin{align}
  \label{mpeminmpa}
  \mpe \to -\mpe, \qquad \mpa \to -\mpa,
\end{align}
without loss of generality we may assume that
\begin{align}
  \label{minfp}
  \lim_{x \to \infty} \mpa(x) = 1.
\end{align}
Note, however, that the value of $\m(0)$ is not fixed and needs to be
determined for the optimal domain wall profile at the material
edge. Such edge domains walls were first discussed in
\cite{rohart13}. 

Since for $\kappa > \kappa_c$, where $\kappa_c$ is given by
\eqref{kappac}, the energy favors helical structures \cite{rohart13}
and, hence, is not bounded below on the semi-infinite interval as well
as on the whole line, throughout the rest of this section we assume
that $\kappa < \kappa_c$.  Assuming also the ansatz from \eqref{manz}
and arguing as in the previous subsection, for $\theta \in H^1(\R^+)$
with $\theta' \in L^1(\R^+)$ we may write the energy in \eqref{E12dp}
as
\begin{align}
  \label{Ethp}
  E(\m) = \int_0^\infty \Big\{ |\theta'|^2 + (Q - 1) \sin^2
  \theta \Big\} dx - \kappa \theta(0),   
\end{align}
which is easily seen to be minimized at fixed
$\theta(0) = \theta_0 \in (0, \pi)$ by
\begin{align}
  \label{thminp}
  \theta(x) = 2 \arctan e^{(x_0 - x) \sqrt{Q-1}}, \qquad x_0 = {\ln
  \tan \left( {\theta_0 \over 2} \right) \over \sqrt{Q-1}}. 
\end{align}
Indeed, using the Modica-Mortola trick \cite{modica87}, we rewrite the
energy in \eqref{Ethp} as
\begin{align}
  \label{MMthp}
  E(\m) = 2 \sqrt{Q - 1} \int_0^\infty |\sin \theta| \, |\theta'| \,
  dx + \int_0^\infty \Big( |\theta'| - \sqrt{Q - 1} \, |\sin
  \theta| \Big)^2 dx - \kappa \theta_0 \notag \\
  \geq  -\int_0^\infty \left( 2 \sqrt{Q - 1} \, |\sin \theta| - \kappa
  \right) \theta' dx
  = \int_0^{\theta_0} \left( 2 \sqrt{Q - 1} \, |\sin \theta| - \kappa
  \right) d \theta.
\end{align}
In particular, the inequality above becomes an equality when $\theta$
is given by \eqref{thminp}.

We now show that there exists a unique value of
$\theta_0 = \theta_0^* \in (0, \pi)$ for which the function from
\eqref{thminp} yields the absolute minimum of the energy in
\eqref{Ethp} for $\kappa < \kappa_c$. Denoting the right-hand side in
\eqref{MMthp} by $F(\theta_0)$, we observe that $F(0) = 0$,
$F'(0) < 0$, and $F(\theta_0) = F(\theta_0 - \pi) + \sigma_{wall}$,
where $\sigma_{wall} > 0$ is given by \eqref{Ewall}, for all
$\theta_0 \geq \pi$.  Therefore, for $\theta_0 \geq 0$ it is enough to
consider the values of $\theta_0 \in (0, \pi)$, for which we have
explicitly
\begin{align}
  \label{F}
  F(\theta_0) = 2 \sqrt{Q - 1} \, (1 - \cos \theta_0) - \kappa
  \theta_0.
\end{align}
A simple computation then shows that for $\theta_0 \geq 0$ the
function $F(\theta_0)$ is uniquely minimized by
\begin{align}
  \label{thpstar}
  \theta_0^* = \arcsin \left( {\kappa \over 2 \sqrt{Q - 1} } \right),
\end{align}
and the minimal value of $F(\theta_0)$ is given by
\begin{align}
  \label{Eedge}
  \sigma_{edge} = 2 \sqrt{Q - 1} \left( 1 - \sqrt{1 - {\kappa^2 \over
  4 (Q - 1)} } \, \right) - \kappa \arcsin \left( {\kappa  \over 2
  \sqrt{Q - 1} } \right) < 0. 
\end{align}
In fact, this is also an absolute lower bound for $E(\m)$ in
\eqref{Ethp}, since for $\theta_0 < 0$ the energy remains
positive. Furthermore, since $\theta_0^* \in (0, \pi)$, this minimum
value is attained by the profile in \eqref{thminp} with
$\theta_0 = \theta_0^*$. Interestingly, we find that
$\theta_0^* \in (0, \arcsin \tfrac{2}{\pi})$, spanning the range from
$0^\circ$ at $\kappa = 0$ to about $39.5^\circ$ for
$\kappa = \kappa_c$. Thus, the global minimizer of the energy in
\eqref{E12dp} among all profiles satisfying \eqref{manz} has the form
of an edge domain wall whose profile is given by \eqref{thminp}, up to
a sign, with an optimal value of $\theta$ at the edge.

We now prove, once again, that this picture remains true without the
ansatz in \eqref{manz} for a slightly smaller range of the values of
$\kappa < \kappa_c$. The appropriate admissible class for the energy
in \eqref{E12dp} is now
\begin{align}
  \label{A1dp}
  \mathcal A^+ = \left\{ \m \in H^1_{loc}(\mathbb R^+; \mathbb S^2) \
  : \ \m'  \in L^1(\R^+; \R^3) \right\}.
\end{align}

\begin{theorem}
  \label{t:edge}
  Let $0 < \kappa < \sqrt{Q - 1}$. Then there exists a unique
  minimizer $\m \in \mathcal A^+$ of \eqref{E12dp} satisfying
  \eqref{minfp}. The minimizer $\m$ has the form in \eqref{manz} with
  $\theta$ given by \eqref{thminp} and $\theta_0 = \theta_0^*$ from
  \eqref{thpstar}, and the minimal energy is given by $\sigma_{edge}$
  from \eqref{Eedge}.
\end{theorem}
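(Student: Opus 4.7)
The proof follows the structure of Theorem \ref{t:1d} closely; the only genuine novelty is a boundary contribution at $x = 0$ produced by integration by parts, which must be jointly optimized with the interior profile. For any $\m \in \mathcal A^+$ with $E(\m) < \infty$, I would first argue (as at the beginning of the proof of Theorem \ref{t:1d}) that $\mpe(x) \to 0$ as $x \to \infty$, and that integration by parts on $[0, R]$ followed by $R \to \infty$ yields the analog of \eqref{E1d2},
\begin{align*}
E(\m) = \int_0^\infty \bigl(|\m'|^2 + (Q - 1)|\mpe|^2 - 2\kappa(\hat x \cdot \mpe)\mpa'\bigr)\, dx - \kappa\,(\hat x \cdot \mpe(0))\,\mpa(0),
\end{align*}
with the extra edge term coming from the non-vanishing boundary at $x=0$.

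Next, I would transcribe the sequence of estimates leading to \eqref{MM}: the trivial bound $-2\kappa(\hat x \cdot \mpe)\mpa' \ge -2\kappa|\mpe||\mpa'|$; the chain-rule consequence $\int_0^\infty|\m'|^2\,dx \ge \int_{\{|\mpa|<1\}}|\mpa'|^2/(1-\mpa^2)\,dx$ coming from $|\mpe|^2 + \mpa^2 = 1$; and completion of the square, producing the Modica-Mortola form. The hypothesis $\kappa < \sqrt{Q-1}$ makes the coefficient $\sqrt{Q-1} - \kappa\sqrt{1-\mpa^2}$ strictly positive, so $|\mpa'|$ may be replaced by $\mpa'$ exactly as in \eqref{E1dlb}. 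Evaluating the resulting antiderivative from $0$ to $R$ and passing $R \to \infty$ with $\mpa(R) \to 1$, and writing $m_0 := \mpa(0)$ and $s_0 := \hat x \cdot \mpe(0)$, I would obtain the lower bound in terms of the boundary data,
\begin{align*}
E(\m) \ge 2\sqrt{Q-1}\,(1 - m_0) - \kappa\bigl(\tfrac{\pi}{2} - \arcsin m_0\bigr) + \kappa\, m_0\sqrt{1-m_0^2} - \kappa\, s_0\, m_0.
\end{align*}

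The final step is the optimization over the free boundary data subject to $s_0^2 \le 1 - m_0^2$. The optimal choice is $s_0 = \mathrm{sign}(m_0)\sqrt{1-m_0^2}$, so $-\kappa s_0 m_0 = -\kappa |m_0|\sqrt{1-m_0^2}$. Parametrising $m_0 = \cos\theta_0$ with $\theta_0 \in [0, \pi]$, the resulting bound equals $F(\theta_0)$ from \eqref{F} on $[0, \pi/2]$ (the two square-root terms cancel exactly) and $F(\theta_0) + \kappa\sin(2\theta_0)$ on $(\pi/2, \pi]$. A short calculus argument using $\kappa < \sqrt{Q-1}$ shows that both expressions are bounded below by $\sigma_{edge} = F(\theta_0^*)$, with the unique minimum attained at $\theta_0 = \theta_0^* \in (0, \pi/2)$.

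For uniqueness, equality in the chain-rule inequality \eqref{chain} forces $\mpe$ to be a scalar multiple of a constant vector; equality in the trivial DMI bound together with $s_0 = \sin \theta_0^* > 0$ pins down $\mpe = \hat x \sqrt{1-\mpa^2}$; vanishing of the discarded square gives the ODE $\mpa' = \sqrt{Q-1}(1-\mpa^2)$ a.e. on the set $\{|\mpa|<1\}$, which integrates with the initial value $\mpa(0) = \cos\theta_0^*$ to the profile \eqref{thminp} with $\theta_0 = \theta_0^*$. The main obstacle I foresee is the calculus estimate ruling out the $\theta_0 \in (\pi/2, \pi]$ branch, which is less elegant than the rest of the argument; apart from that, the proof is a direct adaptation of Theorem \ref{t:1d}.
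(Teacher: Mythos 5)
Your proof is correct and runs on the same Modica--Mortola engine as the paper's; the one place where you genuinely diverge is the treatment of the boundary data, and it is worth comparing the two. The paper bounds the edge term by $-\kappa\,|\mpa(0)|\sqrt{1-\mpa^2(0)}$ immediately and then invokes the identity $|\mpa'| = |\,|\mpa|'\,|$ so that the telescoping step is carried out for $|\mpa|$ rather than $\mpa$. Since $|\mpa(0)| \in [0,1]$, the resulting lower bound is $F(\arccos|\mpa(0)|)$ with the argument automatically confined to $[0,\pi/2]$, where $F \geq F(\theta_0^*) = \sigma_{edge}$ is already known from the one-dimensional ansatz analysis; the branch $\theta_0 \in (\pi/2,\pi]$ never appears. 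Your version keeps $\mpa$ itself and the exact edge term $-\kappa s_0 m_0$, optimizes over $s_0$ at the end, and therefore must separately dispose of the case $m_0 < 0$, i.e.\ verify $F(\theta_0) + \kappa\sin 2\theta_0 \geq \sigma_{edge}$ on $(\pi/2,\pi]$. That inequality is true, but be aware it is not a one-line consequence of crude bounds: writing $a = \sqrt{Q-1}$ and using $\kappa \leq a$, it reduces to showing $2(1-\cos\theta_0) - \theta_0 + \sin 2\theta_0 \geq 0$ on $[\pi/2,\pi]$, whose minimum (at $\sin\theta_0 = (1+\sqrt 5)/4$) is only about $0.023$ --- positive, so the branch contributes at least $0 > \sigma_{edge}$, but a careless estimate such as $F(\theta_0)+\kappa\sin 2\theta_0 \geq 2a - \kappa(\pi+1)$ is \emph{not} sufficient, since that quantity can lie below $\sigma_{edge}$. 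So your argument closes, at the cost of exactly the calculus verification you flagged as the weak point; the paper's $|\mpa|$ substitution is the device that makes it unnecessary. The uniqueness discussion (equality in the chain rule, in the DMI bounds, and in the completed square, pinning $\mpe = \hat x\sqrt{1-\mpa^2}$ and the ODE with initial value $\cos\theta_0^*$) matches the paper's.
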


\begin{proof}
  The proof proceeds exactly as in the case of Theorem \ref{t:1d},
  except that there is now an extra contribution from the boundary of
  the domain at $x = 0$. Namely, instead of \eqref{E1d2} we obtain
  \begin{align}
    \label{E1d2p}
    E(\m) = \int_0^\infty \Big( |\m'|^2 + (Q - 1) |\mpe|^2 -2
    \kappa (\hat x \cdot \mpe)  \mpa' \Big) dx - \kappa \mpa(0) (\hat
    x \cdot \mpe(0)).    
  \end{align}
  Estimating both terms coming from DMI from below as
  \begin{align}
    \label{E1d3p}
    E(\m) \geq \int_0^\infty \Big( |\m'|^2 + (Q - 1) |\mpe|^2 -2
    \kappa \, |\mpe| \,  |\mpa'| \Big) dx - \kappa \, |\mpa(0)| \,
    |\mpe(0)|,
  \end{align}
  and retracing the steps in the proof of Theorem \ref{t:1d}, we
  obtain
  \begin{align}
    \label{MMp}
    E(\m) 
    & \geq 
      2 \int_0^\infty \left(  \sqrt{Q - 1} - \kappa \sqrt{1 -
      \mpa^2}  \, \right) |\mpa'| \, dx - \kappa |\mpa(0)| \sqrt{1 -
      \mpa^2(0)} \notag \\ 
    & \quad + \int_{\{|\mpa| <
      1\}} \left(  {|\mpa'| \over \sqrt{1 - \mpa^2}} - \sqrt{(Q - 1)
      (1 - \mpa^2)} \right)^2 \, dx.
  \end{align}
  With the help of the identity $|\mpa'| = | \, |\mpa|'|$
  \cite[Theorem 6.17]{lieb-loss} and our assumption on $\kappa$, we can
  further estimate the right-hand side in \eqref{MMp} from below as
  \begin{align}
    \label{E1dlbp}
    E(\m) & \geq 2 \int_{0}^R \left(  \sqrt{Q - 1} - \kappa \sqrt{1 -
            \mpa^2}  \, \right) |\mpa'| \, dx - \kappa |\mpa(0)| \sqrt{1 -
            \mpa^2(0)}  \notag \\
          & \geq  2 \int_0^R \left( \sqrt{Q - 1} - \kappa \sqrt{1 - \mpa^2}
            \, \right) |\mpa|' \, dx - \kappa |\mpa(0)| \sqrt{1 -
            \mpa^2(0)}  \notag \\
          & = \left\{ 2 |\mpa(x)| \sqrt{Q - 1} - \kappa
            \left( |\mpa(x)| \sqrt{1-\mpa^2(x)} +\arcsin(|\mpa(x)|) \right)
            \right\}  \bigg|_{0}^R \notag \\
          & \qquad - \kappa |\mpa(0)| \sqrt{1 - \mpa^2(0)}.
  \end{align}
  Simplifying the expression above and passing to the limit, we
  arrive at
  \begin{align}
    \label{MMpp}
    E(\m) \geq 2 \sqrt{Q - 1} \, ( 1 - |\mpa(0)|) - \kappa \arccos
    |\mpa(0)|.
  \end{align}
  However, the right-hand side of \eqref{MMpp} is nothing but
  $F(\arccos |\mpa(0)|)$, where $F$ is given by \eqref{F}. Thus,
  $E(\m) \geq \sigma_{edge}$, and equality holds for the profile given
  by \eqref{manz} and \eqref{thminp}. Furthermore, as in the case of
  Theorem \ref{t:1d}, the inequality above is strict for any other
  wall profile. This concludes the proof.
\end{proof}

\begin{remark}
  \label{r:edge}
  According to Theorem \ref{t:edge}, the magnetization vector in the
  edge wall that asymptotes to $\mpa = +1$ in the sample interior
  acquires a component that points along the inner normal at the
  sample edge. At the same time, by \eqref{mpeminmpa} the
  magnetization vector in the edge wall that asymptotes to $\mpa = -1$
  in the sample interior acquires a component that points along the
  outer normal at the sample edge.
\end{remark}

\section{The problem in two dimensions}
\label{sec:two-dimens-probl0}

\begin{figure}
  \centering
  \includegraphics[width=12cm]{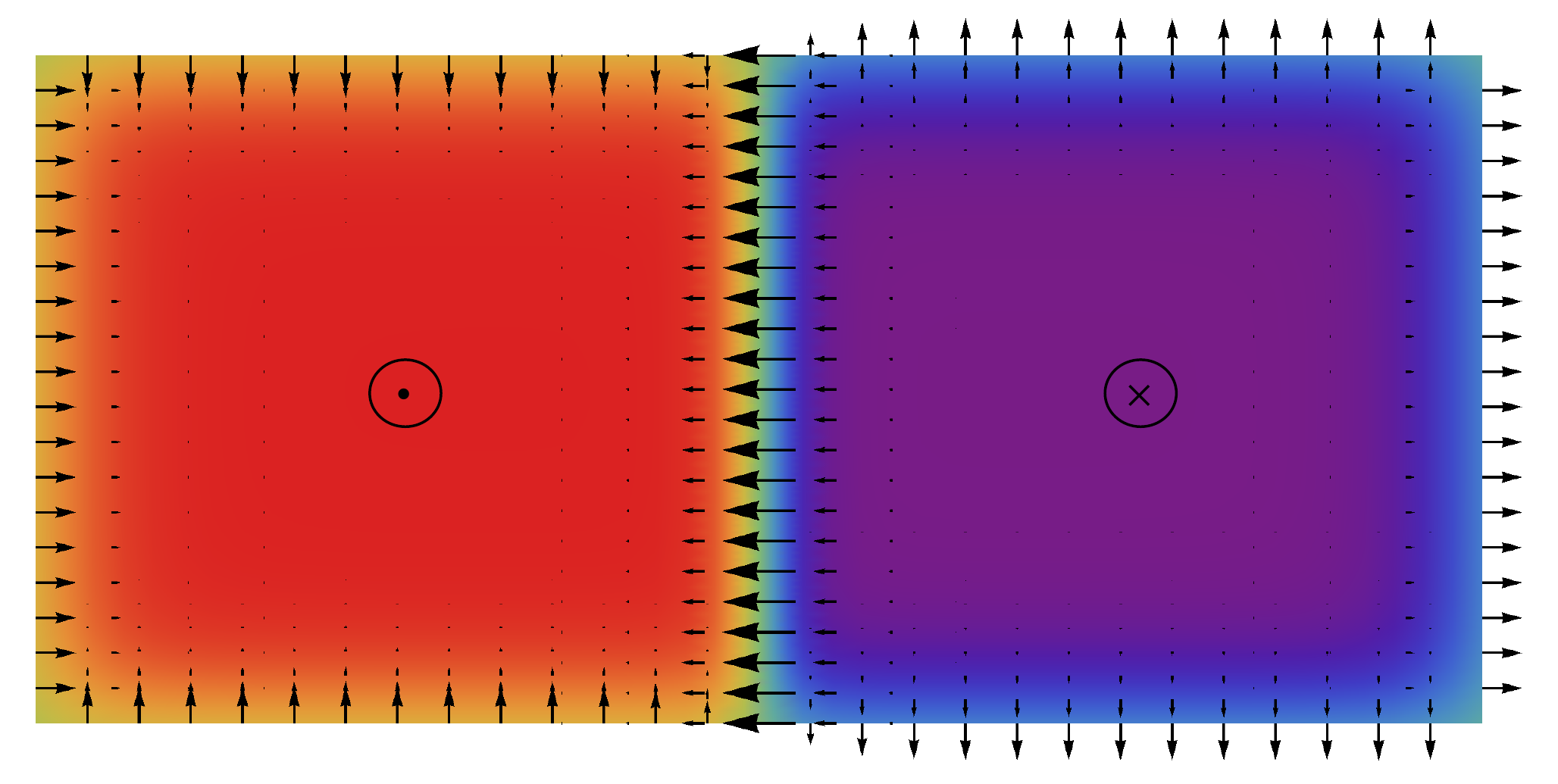}
  \caption{Schematics of a magnetization configuration containing edge
    walls and a Dzyaloshinskii wall. The arrows show the in-plane
    components of the magnetization vector, the colors correspond to
    the out-of-plane component (``red'' is up, ``violet'' is down,
    also indicated by up/down symbols).}
  \label{fig:dzyalw}
\end{figure}

We now go back to the original two-dimensional problem and consider
the regime in which the Dzyaloshinskii domain walls are present (for
an illustration, see Fig. \ref{fig:dzyalw}). The appearance of these
domain walls requires that the lateral extent of the ferromagnetic
sample be sufficiently large. Therefore, we introduce the domain
$\Omega_\eps = \eps^{-1} \Omega$, where $\eps \ll 1$, and redefine the
energy in \eqref{E} on $\Omega_\eps$:
\begin{align}
  \label{Ee}
  E(\m) = \int_{\Omega_\eps} \Big\{ |\nabla \m|^2 + (Q
  - 1) |\mpe|^2 - 2 \mathbf h_\eps \cdot \m + \kappa
  \left( \mpa \nabla \cdot   \mpe - \mpe \cdot \nabla \mpa \right)
  \Big\} \, d^2 r, 
\end{align}
where we also defined a rescaled applied field
$\mathbf h_\eps = (\hpe^\eps, \hpa^\eps) = \eps (\hpe^0, \hpa^0) =
\eps \mathbf h_0$,
chosen to have an appropriate balance between the Zeeman and the
domain wall energies (see below).  We then rescale the domain back to
$\Omega$ and the energy by a factor of $\eps$, which leads to the
following family of energies:
\begin{align}
  \label{Eeps}
  E_\eps(\m) = 
  \int_\Omega \Big\{ \eps |\nabla \m|^2 + \eps^{-1} (Q
  - 1) |\mpe|^2 - 2 \hpa^0 \mpa - 2 \hpe^0 \cdot \mpe \notag \\
  + \kappa \left(
  \mpa \nabla \cdot \mpe - \mpe \cdot \nabla \mpa \right) \Big\} \,
  d^2 r. 
\end{align}
The purpose of this section is to understand the behavior of global
energy minimizers of $E_\eps$ as $\eps \to 0$, which corresponds to
the regime of interest. Throughout the rest of this paper,
$\Omega \subset \R^2$ is assumed to be a bounded domain with boundary
of class $C^2$. This is done merely to reduce the technicalities of
the proofs and focus on the vectorial aspects of the problem involving
DMI. With slight modifications, the proof should apply to the case
when $\partial \Omega$ is a union of finitely many curve segments of
class $C^1$ (see also \cite[Remark 1.3]{modica87aihp}).

Our main tool for the analysis of the variational problem associated
with \eqref{Eeps} will be the following $\Gamma$-convergence result. 

\begin{theorem}
  \label{t:Gamma}
  Let $\mathbf h_0 = (\hpe^0, \hpa^0) \in L^\infty(\Omega; \R^3)$,
  $Q > 1$ and $0 < \kappa < \sqrt{Q - 1}$. Then, as $\eps \to 0$, we
  have $E_\eps \stackrel{\Gamma}{\longrightarrow} E_0$ with respect to
  the $L^1$ convergence, where
  \begin{align}
    \label{E0}
    E_0(\mpa) = \sigma_{edge} \mathcal H^1(\partial \Omega) +
    \sigma_{wall} \mathcal H^1(\partial^* \Omega^+) - 2
    \int_\Omega \hpa^0 \mpa \, d^2r, 
  \end{align}
  in which $\mpa \in BV(\Omega; \{-1,1\})$ and $\partial^* \Omega^+$
  is the reduced boundary of the set $\Omega^+$, where
  \begin{align}
    \label{Omp}
    \Omega^\pm = \{ x \in \Omega \ : \ \mpa(x) = \pm 1 \}.    
  \end{align}

  More precisely:
  \begin{enumerate}[i)]
  \item For any sequence of
    $\m_\eps = (\mpe^\eps, \mpa^\eps) \in H^1(\Omega; \mathbb S^2)$
    such that $\limsup_{\eps \to 0} E_\eps(\m_\eps) < +\infty$ there
    is a subsequence (not relabeled) and a function
    $\mpa^0 \in BV(\Omega; \{-1,1\})$ such that $\mpa^\eps \to \mpa^0$
    and $|\mpe^\eps| \to 0$ in $L^1(\Omega)$ as $\eps \to 0$, and
    \begin{align}
      \label{Eepsliminf}
      \liminf_{\eps \to 0} E_\eps(\m_\eps) \geq E_0(\mpa^0).
    \end{align}
  \item For any $\mpa^0 \in BV(\Omega; \{-1,1\})$ there is a sequence
    of $\m_\eps = (\mpe^\eps, \mpa^\eps) \in H^1(\Omega; \mathbb S^2)$
    such that $\mpa^\eps \to \mpa^0$ and $|\mpe^\eps| \to 0$ in
    $L^1(\Omega)$ as $\eps \to 0$, and
    \begin{align}
      \label{Eepslimsup}
      \limsup_{\eps \to 0} E_\eps(\m_\eps) \leq E_0(\mpa^0).
    \end{align}
  \end{enumerate}
\end{theorem}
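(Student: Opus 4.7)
The plan is to follow the Modica--Mortola framework for vectorial gradient theory of phase transitions, with boundary contributions treated in the spirit of \cite{modica87aihp, owen90}, exploiting the decoupling of $\mpe$ from $\mpa$ implicit in the one-dimensional arguments of Theorem~\ref{t:1d} and Theorem~\ref{t:edge}. As a first step, I would integrate by parts the DMI term in \eqref{Eeps} to rewrite
\begin{align*}
  E_\eps(\m) = \int_\Omega \Big\{ \eps |\nabla \m|^2 + \eps^{-1}(Q-1)|\mpe|^2 - 2\kappa\, \mpe \cdot \nabla \mpa - 2\mathbf h_0 \cdot \m \Big\} d^2 r + \kappa \int_{\partial \Omega} \mpa (\mpe \cdot \nu)\, d\mathcal H^1,
\end{align*}
thereby exposing an interior/boundary split that parallels the one-dimensional computations leading to \eqref{E1d2} and \eqref{E1d2p}.

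For the compactness and liminf, I would apply the chain-rule inequality $|\nabla \m|^2 \geq |\nabla \mpa|^2/(1-\mpa^2)$ on $\{|\mpa|<1\}$ together with the constraint $|\mpe|^2 = 1 - \mpa^2$ and AM--GM to obtain the pointwise Modica--Mortola lower bound
\begin{align*}
  \eps |\nabla \m|^2 + \eps^{-1}(Q-1)|\mpe|^2 - 2\kappa |\mpe|\,|\nabla \mpa| \geq 2 \bigl(\sqrt{Q-1} - \kappa\sqrt{1-\mpa^2}\bigr) |\nabla \mpa|,
\end{align*}
which is nonnegative thanks to the hypothesis $\kappa < \sqrt{Q-1}$. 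This yields a uniform $BV$ bound on $\mpa^\eps$ and forces $\mpa^\eps \to \mpa^0 \in BV(\Omega;\{-1,1\})$ and $|\mpe^\eps| \to 0$ in $L^1(\Omega)$. The interior wall contribution $\sigma_{wall}\, \mathcal H^1(\partial^* \Omega^+)$ is identified by applying the $BV$ coarea formula to the lower-bound density and evaluating $\int_{-1}^1 2(\sqrt{Q-1} - \kappa \sqrt{1-u^2})\, du$, which reproduces \eqref{Ewall}. For the edge contribution, I would parametrize a tubular neighborhood of $\partial \Omega$ by $(s,t)$, with $s$ arc-length and $t$ inward normal distance, use Fubini and the rescaling $t = \eps y$ to recognize each slice energy (together with the corresponding portion of the boundary term) as the one-dimensional edge functional \eqref{E12dp}, and invoke the slicewise lower bound $\sigma_{edge}$ of Theorem~\ref{t:edge}; integration in $s$ delivers $\sigma_{edge}\, \mathcal H^1(\partial \Omega)$. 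The Zeeman term passes to the limit by dominated convergence.

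For the recovery sequence, I would first reduce to the case in which $\partial \Omega^+ \cap \Omega$ is of class $C^2$, using the density of smooth sets in $BV$ and the continuity of $E_0$ along such approximations. I would then construct $\m_\eps$ explicitly: in a tubular neighborhood of $\partial \Omega^+$, set $\m_\eps$ to the interior profile \eqref{manz}--\eqref{thmin} evaluated at $\mathrm{dist}(x, \partial \Omega^+)/\eps$, with $\mpe$ oriented along the signed distance gradient; in a tubular neighborhood of $\partial \Omega$, set $\m_\eps$ to the edge profile \eqref{manz}, \eqref{thminp} with $\theta_0 = \theta_0^*$, oriented as in Remark~\ref{r:edge}; and set $\m_\eps = (\mathbf 0, \pm 1)$ elsewhere. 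A slice-by-slice calculation then gives $E_\eps(\m_\eps) \to E_0(\mpa^0)$, the curvature corrections from the tubular Jacobian vanishing thanks to the $C^2$ regularity.

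The main obstacle is the junction region where $\partial^* \Omega^+$ meets $\partial \Omega$: within an $O(\eps)$ ball around such a junction the two canonical one-dimensional profiles are geometrically incompatible, and a naive gluing would produce an $O(1)$ excess. I would resolve this by cutting off on a mesoscopic scale $\eps \ll r_\eps \ll 1$ (e.g.\ $r_\eps = \eps^{1/2}$) and interpolating between the interior and edge profiles over a region of area $r_\eps^2$ carrying an energy density of order $\eps^{-1}$, so that the total junction energy is of order $r_\eps^2/\eps \to 0$. The mirror issue on the liminf side --- avoiding double-counting of wall and edge contributions near junctions --- is handled by restricting the two tubular slicing arguments to disjoint strips and using the nonnegativity of the Modica--Mortola density outside them, in line with \cite{owen90}.
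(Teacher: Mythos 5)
Your overall architecture (integration by parts to expose the boundary DMI term, the chain-rule inequality $|\nabla\m|^2\geq|\nabla\mpa|^2/(1-\mpa^2)$, the pointwise Modica--Mortola bound $2(\sqrt{Q-1}-\kappa\sqrt{1-\mpa^2})|\nabla\mpa|$, the resulting $BV$ compactness, and the coarea identification of $\sigma_{wall}$) coincides with the paper's Steps~1 and~2. Where you diverge is in the edge contribution to the liminf and in the recovery sequence, and both divergences contain genuine gaps. For the edge term you propose to slice the \emph{full} energy in a tubular strip and recognize each normal slice as the one-dimensional functional \eqref{E12dp}. But the two-dimensional DMI density, written in normal coordinates, contains a tangential part $-2\kappa(\mpe\cdot\tau)\,\partial_s\mpa$ that is absent from \eqref{E12dp}, is not sign-definite, and is only $O(1)$ a priori in the strip (Cauchy--Schwarz with $\||\mpe|\|_{L^2}^2\lesssim\eps$ and $\|\nabla\mpa\|_{L^2}^2\lesssim\eps^{-1}$ gives no smallness); absorbing it into the anisotropy term would consume the very term needed for the normal Modica--Mortola bound and destroy the sharp constants. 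The paper sidesteps this by never splitting directions: it bounds $|\mpe\cdot\nabla\mpa|\leq|\mpe|\,|\nabla\mpa|$ with the \emph{full} gradient, reduces to the scalar functional $\int_\Omega|\nabla u_\eps|+\int_{\partial\Omega}\sigma(\widetilde u_\eps)$ with $u_\eps=\Phi(\mpa^\eps)$ and contact energy $\sigma(u)=-\kappa|\Phi^{-1}(u)|\sqrt{1-|\Phi^{-1}(u)|^2}$, replaces $\sigma$ by the $1$-Lipschitz minorant $\widetilde\sigma$ of \eqref{sigmat}--\eqref{sigmat2}, and invokes Modica's lower-semicontinuity result \cite[Proposition 1.2]{modica87aihp}. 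This Lipschitz relaxation --- equivalently, minimizing $|u(r_\eps)-v|+\sigma(v)$ over the admissible trace values $v$, which is exactly the computation producing $F(\theta_0^*)=\sigma_{edge}$ --- is the step at which $\sigma_{edge}$ is actually identified and without which the boundary functional is not lower semicontinuous; it is missing from your sketch.

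On the limsup side, your junction estimate is arithmetically inconsistent as stated: with $r_\eps=\eps^{1/2}$ and an energy density of order $\eps^{-1}$ over an area $r_\eps^2$, the total is $r_\eps^2/\eps=1$, not $o(1)$. The idea is salvageable --- either take $\eps\ll r_\eps\ll\eps^{1/2}$, or observe that in the junction ball the energy concentrates on an $\eps$-neighborhood of the walls so the cost is $O(r_\eps)$ --- but you must also arrange the $90^\circ$ rotation of the $\mpe$-orientation between the interior wall (normal to $\partial^*\Omega^+$) and the edge wall (normal to $\partial\Omega$) without extra cost. The paper avoids gluing and junctions entirely by working in angle variables: it prescribes the orientation through a single phase $\phi_*\in W^{1,p}(\Omega)$, $1<p<2$, whose trace realizes the required normals on $\partial\Omega^\pm$ (via Marschall's trace extension theorem; the subcritical $p$ accommodates the unavoidable winding singularities of the phase, which cost nothing because $\sin\theta_\eps$ vanishes there), and then imports the $\theta_\eps$ construction wholesale from \cite[Lemma 2]{owen90}. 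Your direct construction is more concrete but, as written, does not close; I would either repair the junction bookkeeping or adopt the angle-variable route.
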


\begin{proof}
  The proof follows the classical argument of Modica
  \cite{modica87aihp} adapted to the vectorial micromagnetic setting
  and taking into account the boundary contributions to the
  energy. The latter arise after integration by parts:
  \begin{align}
    \label{Eeps2}
    E_\eps(\m) = \int_\Omega \Big( \eps |\nabla \m|^2 + \eps^{-1} (Q
    - 1) |\mpe|^2 - 2 \hpa^0 \mpa - 2 \hpe^0 \cdot \mpe^0 
    - 2 \kappa \mpe \cdot \nabla \mpa
    \Big) \, d^2 r \notag \\
    + \, \kappa \int_{\partial \Omega} \wmpa (\wmpe \cdot \nu) \, d
    \mathcal H^1(r),
  \end{align}
  where $\nu$ is the outward unit normal to $\partial \Omega$ and
  $(\wmpe, \wmpa)$ is the trace of $(\mpe, \mpa)$ on
  $\partial \Omega$. The proof proceeds in three steps.

  \medskip

  \noindent \emph{Step 1: Compactness.} Given an admissible sequence
  of $\m_\eps = (\mpe^\eps, \mpa^\eps)$ satisfying
  $ E_\eps(\m_\eps) \leq C$ as $\eps \to 0$ for some $C > 0$
  independent of $\eps$, with the help of \eqref{Eeps2} and an
  elementary bound on the DMI term we can write
  \begin{align}
    \label{Eeps2comp}
    \int_\Omega \Big( \eps |\nabla \m_\eps|^2 + \eps^{-1} (Q
    - 1) |\mpe^\eps|^2 - 2 \kappa \, |\mpe^\eps| \, |\nabla \mpa^\eps| 
    \Big) \, d^2 r \qquad \quad \notag \\
    \leq C + 2 \| |\mathbf h_0 | \|_{L^\infty(\Omega)} |\Omega| +
    \kappa \mathcal H^1(\partial \Omega). 
  \end{align}
  Therefore, from \eqref{chain2} we obtain
  \begin{align}
    \label{Eeps2comp2}
    \int_{\Omega \cap \{ |\mpa^\eps| < 1 \} } \left( {\eps |\nabla
    \mpa^\eps|^2 \over 1 - |\mpa^\eps|^2} + \eps^{-1} (Q 
    - 1) (1 - |\mpa^\eps|^2) \right) \, d^2 r \qquad \quad \notag \\ 
    - 2 \kappa \int_\Omega \sqrt{1 - |\mpa^\eps|^2} \, |\nabla 
    \mpa^\eps| \, d^2 r \leq C',
  \end{align}
  for some constant $C' > 0$ independent of $\eps$. Applying the
  Modica-Mortola trick to the first line in \eqref{Eeps2comp2} and
  using the fact that by \eqref{chain} we have
  $|\nabla \mpa^\eps| = 0$ whenever $|\mpa^\eps| = 1$, we obtain
  \begin{align}
    \label{Eeps2comp3}
    2 \int_\Omega \left( \sqrt{Q - 1} - \kappa \sqrt{1 -
    |\mpa^\eps|^2} \, \right) |\nabla \mpa^\eps| \, d^2 r \leq C'. 
  \end{align}
  This is equivalent to
  $ \int_\Omega |\nabla \Phi(\mpa^\eps) | \, d^2 r \leq C'$, where
  \begin{align}
    \label{Phi2}
    \Phi(s) = 2 \int_0^s \left( \sqrt{Q - 1} - \kappa \sqrt{1 - t^2} \right)
    dt = 2  s \sqrt{Q - 1} - \kappa s \sqrt{1 - s^2} - \kappa
    \arcsin s
  \end{align}
  is a continuously differentiable, strictly increasing odd function
  of $s \in [-1,1]$.  Furthermore, by our assumption on $\kappa$ we
  have
  $0 < 2 ( \sqrt{Q - 1} - \kappa ) \leq \Phi'(s) \leq 2 \sqrt{Q - 1}$.
  Therefore, by weak chain rule \cite[Proposition 9.5]{brezis} we have
  \begin{align}
    \label{BV}
    \| \mpa^\eps \|_{W^{1,1}(\Omega)} \leq C'',     
  \end{align}
  for some $C'' > 0$ independent of $\eps$.  In turn, by compactness
  in $BV(\Omega)$ and the compact embedding of $BV(\Omega)$ into
  $L^1(\Omega)$ \cite{ambrosio}, this yields, upon extraction of a
  subsequence, that $\mpa^\eps \to \mpa^0$ in $L^1(\Omega)$ for some
  $\mpa^0 \in BV(\Omega)$.

  To prove that $|\mpa^0| = 1$ and, as a consequence, that
  $|\mpe^\eps| \to 0$ in $L^1(\Omega)$, we combine \eqref{Eeps2comp2}
  and \eqref{BV} to get
  \begin{align}
    \label{Eeps2comp4}
    \eps^{-1} (Q - 1) \int_\Omega \left( 1 - |\mpa^\eps|^2 \right) \,
    d^2 r \leq C' + 2 \kappa C''.
  \end{align}
  Therefore, the integral in the left-hand side of \eqref{Eeps2comp4}
  converges to zero as $\eps \to 0$ and, hence, $\mpa^\eps(x) \to \pm
  1$ for a.e. $x \in \Omega$. This concludes the proof of the
  compactness part of our $\Gamma$-convergence result. 

 \medskip

 \noindent \emph{Step 2: Lower bound.} We now proceed to establish
 \eqref{Eepsliminf}. By the Modica-Mortola type arguments in Step 1,
 we can estimate the energy from below as
  \begin{align}
    \label{Eeps2MM}
    E_\eps(\m_\eps) \geq \int_\Omega \left( |\nabla \Phi(\mpa^\eps)|
    - 2 \hpa^0 \mpa^\eps - 2 \hpe^0 \cdot \mpe^\eps \right) d^2 r -
    \kappa \int_{\partial\Omega} |\wmpa^\eps| \sqrt{ 1-
    |\wmpa^\eps|^2} \, d \mathcal H^1(r).  
  \end{align}
  Let $u_\eps = \Phi(\mpa^\eps)$. Then the lower bound in
  \eqref{Eeps2MM} may be rewritten as
  \begin{align}
    \label{Eeps2MM2}
    E_\eps(\m_\eps) \geq \int_\Omega \left( |\nabla u_\eps|
    - 2 \hpa^0 \mpa^\eps - 2 \hpe^0 \cdot \mpe^\eps \right) d^2 r +
    \int_{\partial\Omega} \sigma(\widetilde u_\eps) \, d \mathcal H^1(r),
  \end{align}
  where
  $\sigma(u) = -\kappa |\Phi^{-1}(u)| \sqrt{1 - |\Phi^{-1}(u)|^2}$ and
  $\widetilde u_\eps$ is the trace of $u_\eps$ on $\partial \Omega$,
  noting that $u = \Phi(s)$ defines a continuously differentiable
  one-to-one
  map from $[-1,1]$ to \\
  $I = \left[-2 \sqrt{Q - 1} + \frac12 \pi \kappa, 2 \sqrt{Q - 1} -
    \frac12 \pi \kappa \right]$.  We next define
  \begin{align}
    \label{sigmat}
    \widetilde \sigma(u) = |u| + \min_{t \in I} \left( \sigma(t) - |t|
    \right) \qquad u \in I.  
  \end{align}
  A straightforward calculation shows that we have explicitly
  \begin{align}
    \label{sigmat2}
    \widetilde \sigma(u)= |u| - \sqrt{4(Q - 1) - \kappa^2} + \kappa
    \arcsin \sqrt{1 - \frac{\kappa^2}{4(Q-1)}}.
  \end{align}
  In particular, $\widetilde \sigma(u)$ is a 1-Lipschitz function of
  $u$, and by definition $\widetilde \sigma(u) \leq \sigma(u)$.
  Therefore, by \cite[Proposition 1.2]{modica87aihp} and the fact that
  $|\mpe^\eps| \to 0$ in $L^1(\Omega)$, proved in Step 1, we have
  \begin{align}
    \label{Mlsc}
    \liminf_{\eps \to 0} E_\eps(\m_\eps) \geq \liminf_{\eps \to 0}
    \left( \int_\Omega |\nabla 
    u_\eps| \, d^2 r + \int_{\partial \Omega} \widetilde \sigma(\widetilde
    u_\eps) \, d \mathcal H^1(r) \right) -2 \int_\Omega \hpa^0
    \mpa^0 \, d^2 r \notag \\
    \geq \int_\Omega |\nabla u_0| \, d^2 r + \int_{\partial \Omega} \widetilde \sigma ( 
    \widetilde u_0 ) \, d \mathcal H^1(r) - 2 \int_\Omega \hpa^0
    \mpa^0 \, d^2 r ,
  \end{align}
  where
  $u_0 \in BV(\Omega; \{ -2 \sqrt{Q - 1} + \frac12 \pi \kappa, 2
  \sqrt{Q - 1} - \frac12 \pi \kappa \})$
  and $u_\eps \to u_0$ in $L^1(\Omega)$. In \eqref{Mlsc}, the first
  integral in the last line denotes the total variation of $u_0$, and
  the second term is understood as an integral of the trace of a BV
  function \cite{ambrosio}. Notice that by \eqref{sigmat2} we have
  $\widetilde \sigma(\widetilde u_0) = \sigma_{edge}$ and
  $|\nabla u_0| = \frac12 \sigma_{wall} |\nabla \mpa^0|$, after
  straightforward algebra. Therefore, the last inequality is
  equivalent to
  \begin{align}
    \label{Eeps2lb}
    \liminf_{\eps \to 0} E_\eps(\m_\eps) \geq {\sigma_{wall} \over 2}
    \int_\Omega | \nabla \mpa^0| \, d^2 r + \sigma_{edge} \mathcal
    H^1(\partial \Omega) - 2 \int_\Omega \hpa^0 \mpa^0 \,  d^2 r,
  \end{align}
  which coincides with \eqref{Eepsliminf} \cite{ambrosio}.

\medskip

\noindent \emph{Step 3: Upper bound.} Without loss of generality, we
may assume $\hpa=0$ and $\hpe=0$. Since we have to preserve the
constraint $|\m|=1$, we will construct an upper bound, using the angle
variables $\theta$ and $\phi$. Namely, we define
$\m = (\sin\theta \cos\phi, \sin\theta\sin\phi, \cos\theta)$ and
rewrite the energy in \eqref{Eeps} in terms of $\theta$ and $\phi$
(assumed to be sufficiently smooth) as follows:
\begin{align} 
  \label{Etp} 
  E(\m)&= \int_\Omega \Big( \eps |\nabla \theta|^2 +
         \eps \sin^2\theta
         |\nabla \phi|^2 + \eps^{-1} (Q-1) \sin^2\theta \Big) \, d^2 r
         \nonumber \\  
       & + \kappa \int_\Omega ( \sin \theta \cos \theta - \theta)  \,
         \nabla \cdot \vv(\phi)  \, d^2 r + \kappa \int_{\partial
         \Omega} \theta \, \vv(\phi) \cdot \nu \, d \mathcal H^1(r),
\end{align}
where $\vv(\phi)=(\cos\phi, \sin\phi)$, and we used integration by
parts.

Let $\Omega^\pm$ be defined as in \eqref{Omp} with $\mpa = \mpa^0$.
Without loss of generality, we assume that $\partial^* \Omega^+$ has
$C^2$ regularity, and that $\partial^* \Omega^+$ intersects
$\partial \Omega$ transversally, if at all.  We define
\begin{equation}
  \theta_*(x) =
\begin{cases}
  0& x \in \Omega^+ \\
  \pi & x \in \Omega^-
\end{cases}, \qquad \theta_b(x) =
\begin{cases}
  \theta_0^* & x \in \partial \Omega \backslash \partial \Omega^- \\
  \pi -\theta_0^* & x \in \partial \Omega \backslash \partial \Omega^+
\end{cases},
\end{equation}
where $\theta_0^*$ is defined in \eqref{thpstar}, and take a sequence
of $\theta_\eps \in C^1(\overline\Omega)$ such that
\begin{equation} 
  \label{thetaeps} 
  0 \leq \theta_\eps \leq \pi, \qquad \theta_\eps \to 
  \theta_*\hbox{ in } L^1(\Omega), \qquad \theta_\eps \to
  \theta_b \hbox{ in } L^1(\partial \Omega).
\end{equation}
Notice that we also have $\theta_\eps \to \theta_*$ in $L^q(\Omega)$
for every $q > 1$.  

Now, for a fixed $1 < p < 2$ we take two functions
$\phi_*^\pm \in W^{1,p} (\Omega^\pm)$ with values in $[0, 2 \pi)$ such
that
\begin{align}
  \vv(\tilde\phi_*^\pm(x)) = \mp \nu_{\Omega^\pm}(x) \quad \text{for
  a.e.} \quad x \in \partial \Omega^\pm,   
\end{align}
where $\nu_{\Omega^\pm}$ is the outward normal to $\Omega^\pm$ and
$\tilde \phi_*^\pm$ are the traces of $\phi_*^\pm$ on
$\partial \Omega^\pm$, respectively. Such functions exists, for
example, by \cite[Theorem 2]{marschall87}, since $\tilde \phi_*^\pm$
are $C^1$ functions of the arclength, except at a finite number of
isolated points where they have jump discontinuities, and, hence,
belong to the appropriate Besov spaces in the assumptions of
\cite{marschall87}. Next, we define $\phi_* \in W^{1,p}(\Omega)$ as
\begin{align}
  \phi_*(x) = 
  \begin{cases}
    \phi_*^-(x) & x \in \Omega^- \\
    \phi_*^+(x) & x \in \Omega^+ 
  \end{cases}
               ,
\end{align}
and observe that by construction we have
\begin{equation}
\vv(\tilde\phi_*) = 
\begin{cases}
  \nu_\Omega & \text{on} \  \partial\Omega^- \cap \partial \Omega  \\
  -\nu_\Omega & \text{on} \  \partial\Omega^+ \cap \partial \Omega \\
  \nu_* & \text{on} \ \partial^* \Omega^+
\end{cases},
\end{equation}
where $\nu$'s are the corresponding outward normals to the respective
boundaries and $\tilde \phi_*$ is the trace of $\phi_*$ on those
boundaries.  We can then construct, using a regularization and a
diagonal argument, a sequence of $\phi_\eps \in C^1(\overline\Omega)$
such that
\begin{equation}
  \phi_\eps \to \phi_* \hbox{ in } W^{1,p}(\Omega) \quad \hbox{ and }
  \quad  \eps |\nabla \phi_\eps|^2 \to 0 \hbox{ in } L^1(\Omega).
\end{equation}

It is then clear that, as $\eps \to 0$, we have
\begin{align}
  & \int_\Omega \theta_\eps \, \nabla \cdot \vv(\phi_\eps) \, d^2 r 
    \to \pi \int_{\Omega^-} \nabla \cdot \vv(\phi_*)
    \, d^2 r = \pi \mathcal H^1(\partial^* \Omega^+) + \pi \mathcal 
    H^1(\partial \Omega^- \cap \partial \Omega), \\
  & \int_\Omega \sin \theta_\eps \cos \theta_\eps \nabla \cdot
    \vv(\phi_\eps)  \, d^2 r 
    \to 0, \\
  & \int_{\partial \Omega} \theta_\eps \,  \vv(\phi_\eps) \cdot \nu\,  d
    \mathcal H^1(r) 
    \to -\theta_0^* \mathcal H^1 (\partial \Omega^+ \cap \partial
    \Omega) + (\pi -\theta_0^*) \mathcal H^1 (\partial \Omega^-
    \cap \partial \Omega).  
\end{align}
Passing to the limit as $\eps \to 0$ in the energy \eqref{Etp} and
combining the terms, we obtain
\begin{align}
  \limsup_{\eps \to 0} E(\m_\eps) 
  &= \limsup_{\eps \to 0} \int_\Omega \Big(
    \eps |\nabla \theta_\eps|^2  +
    \eps^{-1} (Q-1) \sin^2\theta_\eps \Big) \, d^2 r 
    \nonumber \\ 
  & - \pi \kappa \mathcal H^1(\partial^* \Omega^+) - \kappa \theta_0^* 
    \mathcal H^1(\partial \Omega). 
\end{align}

In order to conclude, we need to construct a sequence of
$\theta_\eps \in C^1(\overline \Omega)$ satisfying \eqref{thetaeps}
such that
\begin{align}
  \limsup_{\eps \to 0} \int_\Omega \Big( \eps |\nabla \theta_\eps|^2  +
  \eps^{-1} (Q-1) \sin^2\theta_\eps\Big) \, d^2 r \qquad \qquad \qquad
  \notag 
  \\ 
  = E_0(\mpa) + \pi \kappa \mathcal
  H^1(\partial^* \Omega^+) + \kappa \theta_0^* \mathcal H^1(\partial 
  \Omega). 
\end{align}
This construction was done in a more general setting in \cite[Lemma
2]{owen90}) and, therefore, using this result we conclude that
$\limsup_{\eps \to 0} E(\m_\eps) = E_0(\mpa)$, where \\
$\m_\eps = (\sin\theta_\eps \cos\phi_\eps, \sin\theta_\eps
\sin\phi_\eps, \cos\theta_\eps)$
and $(\theta_\eps, \phi_\eps)$ are as above.
\end{proof}

As an immediate consequence of $\Gamma$-convergence, we have the
following asymptotic characterization of minimizers of the energy
$E_\eps$ in terms of the minimizers of $E_0$.

\begin{corollary}
  Under the assumptions of Theorem \ref{t:Gamma}, let
  $\m_\eps = (\mpe^\eps, \mpa^\eps) \in H^1(\Omega; \mathbb S^2)$ be a
  sequence of minimizers of $E_\eps$. Then, after extracting a
  subsequence, we have $\mpa^\eps \to \mpa^0 $ and $|\mpe^\eps| \to 0$
  in $L^1(\Omega)$, where $\mpa^0 \in BV(\Omega; \{-1,1\})$ is a
  minimizer of $E_0$.
\end{corollary}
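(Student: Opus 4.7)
The plan is to deduce the corollary as the standard consequence of the $\Gamma$-convergence established in Theorem~\ref{t:Gamma}. First I would verify that $E_\eps(\m_\eps)$ is uniformly bounded as $\eps \to 0$. To this end, pick any fixed $\bar{\mpa} \in BV(\Omega; \{-1,1\})$, for instance $\bar{\mpa} \equiv 1$, and apply Theorem~\ref{t:Gamma}(ii) to obtain a recovery sequence $\bar{\m}_\eps \in H^1(\Omega; \mathbb S^2)$ with $\limsup_{\eps \to 0} E_\eps(\bar{\m}_\eps) \leq E_0(\bar{\mpa}) < +\infty$. Since $\m_\eps$ is a minimizer, $E_\eps(\m_\eps) \leq E_\eps(\bar{\m}_\eps)$, giving the desired uniform bound.

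With this bound in hand, the compactness statement in Theorem~\ref{t:Gamma}(i) applies: after extracting a subsequence (not relabeled), there is $\mpa^0 \in BV(\Omega; \{-1,1\})$ such that $\mpa^\eps \to \mpa^0$ and $|\mpe^\eps| \to 0$ in $L^1(\Omega)$, and $\liminf_{\eps \to 0} E_\eps(\m_\eps) \geq E_0(\mpa^0)$. To see that $\mpa^0$ is in fact a global minimizer of $E_0$, fix an arbitrary competitor $\mpa \in BV(\Omega; \{-1,1\})$ and invoke Theorem~\ref{t:Gamma}(ii) once more to produce a sequence $\tilde{\m}_\eps \in H^1(\Omega; \mathbb S^2)$ with $\limsup_{\eps \to 0} E_\eps(\tilde{\m}_\eps) \leq E_0(\mpa)$. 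Chaining the minimality of $\m_\eps$ with the liminf and limsup inequalities yields
\[
E_0(\mpa^0) \leq \liminf_{\eps \to 0} E_\eps(\m_\eps) \leq \limsup_{\eps \to 0} E_\eps(\tilde{\m}_\eps) \leq E_0(\mpa).
\]
Since $\mpa$ was arbitrary, $\mpa^0$ minimizes $E_0$ on $BV(\Omega; \{-1,1\})$.

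There is no serious obstacle here: this is the textbook consequence of $\Gamma$-convergence coupled with equicoercivity, and all the essential analytic work was done in Theorem~\ref{t:Gamma}. The only mild point worth noting is the existence of minimizers $\m_\eps$ for each fixed $\eps > 0$, which the statement takes as a hypothesis; if one wished to establish it, this could be done by the direct method in $H^1(\Omega; \mathbb S^2)$, using coercivity from the exchange and anisotropy terms together with weak $H^1$ lower semicontinuity of the full energy (the DMI term, being linear in $\nabla \m$, is weakly continuous, and the pointwise constraint $|\m|=1$ is preserved via strong $L^2$ convergence along a weakly $H^1$-convergent subsequence).
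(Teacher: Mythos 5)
Your argument is correct and is precisely the standard deduction the paper has in mind when it states the corollary "as an immediate consequence of $\Gamma$-convergence": a recovery sequence for a fixed competitor gives the uniform energy bound, compactness from Theorem~\ref{t:Gamma}(i) extracts the limit, and the liminf/limsup inequalities combined with minimality of $\m_\eps$ show the limit minimizes $E_0$. The paper offers no written proof beyond this assertion, so there is nothing further to compare.
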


We note that by classical results for problems with prescribed mean
curvature (see, e.g., \cite{maggi} and references therein), the
minimizers of $E_0$ are functions, whose jump set
$\Gamma \subset \overline\Omega$ is a union of finitely many $C^{1,1}$
curve segments satisfying weakly the equation
\begin{align}
  \label{kh}
  \sigma_{wall} K(x) = 4 \hpa^0(x), \quad x \in \Gamma \cap \Omega,
  \qquad \qquad \Gamma'(x) \perp \partial \Omega, \quad x \in \Gamma 
  \cap \partial \Omega,
\end{align}
where $K$ is the curvature of $\Gamma$, positive if the set $\Omega^+$
is convex, and the prime denotes arclength derivative.  Physically,
these are interpreted as the Dzyaloshinskii domain walls separating
the domains of opposite out-of-plane magnetization under the external
applied field. We also note that the limit energy $E_0$ contains a
contribution from the edge domain walls, which, however, is
independent of the magnetization orientation near the edge and thus
only adds a constant term to the energy.

\begin{remark}
  We note that by the results of \cite{kohn89}, we can also say that
  if $\mpa^0$ is an isolated local minimizer of $E_0$, then there
  exists a sequence of local minimizers $\m_\eps$ of $E_\eps$ such
  that $\mpa^\eps \to \mpa^0$ and $\mpe^\eps \to 0$ in $L^1(\Omega)$.
\end{remark}

Before concluding this section, let us comment on some topological
issues related to the result in Theorem \ref{t:Gamma}. We note that
our upper construction in Theorem \ref{t:Gamma} uses the magnetization
configurations that have topological degree zero. This has to do with
the representation of the test configurations $\m_\eps$ adopted in the
proof in terms of the angle variables $(\theta_\eps, \phi_\eps)$,
which are assumed to be of class $C^1$ up to the boundary. Therefore,
the proof does not immediately extend to the admissible classes with
prescribed topological degree distinct from zero. This is not a
problem, however, in view of the fact that away from the domain walls
one could insert skyrmion profiles \cite{melcher14}, suitably
localized, into our test functions to prescribe a fixed topological
degree for $\eps$ sufficiently small. Our result would then not be
altered, in view of the fact that in the considered scaling the energy
of a skyrmion is a lower order perturbation to that of chiral
walls. In other words, under the considered scaling assumptions our
energy does not see magnetic skyrmions.

\section{Discussion}
\label{sec:discussion}

To summarize, we have analyzed the basic domain wall profiles in the
local version of the micromagnetic modeling framework containing DMI,
which is governed by the energy in \eqref{E}. Specifically, we
performed an analysis of the one-dimensional energy minimizing
configurations on the whole line and on half-line and showed that the
magnetization profiles expected from the physical considerations based
on specific {\em ans\"atze} are indeed the unique global energy
minimizers for $|\kappa| < \sqrt{Q - 1}$. This is slightly below
(about 30\%) the threshold value of
$|\kappa| = \kappa_c = {4 \over \pi} \sqrt{Q - 1}$, beyond which
helical structures emerge. Our methods rely on a sharp Modica-Mortola
type inequality and do not extend to the narrow range of
$\sqrt{Q - 1} \leq |\kappa| < {4 \over \pi} \sqrt{Q - 1}$. It is
natural to expect that our result persists all the way to
$|\kappa| = \kappa_c$, but to justify this statement one would need to
develop new analysis tools for the vectorial variational problem
associated with the domain walls.

Our one-dimensional analysis in section \ref{sec:probl-one-dimens}
identified two basic types of chiral domain walls: the interior and
the edge domain walls. These one-dimensional domain wall solutions are
the building blocks of the more complicated two-dimensional
magnetization configurations in ultrathin films subjected to
sufficiently small applied magnetic fields. This can be seen from the
analysis of $\Gamma$-convergence of the energy in \eqref{Eeps}
performed in section \ref{sec:two-dimens-probl0}. Either global or
local energy minimizers for $\eps \ll 1$ may then be approximated by
those of the energy in \eqref{E0}, which determines the geometry of
the magnetic domains in the sample. Our findings indicate that in the
considered limit the magnetization configurations solve the prescribed
mean curvature problem in \eqref{kh}, again, for
$|\kappa| < \sqrt{ Q - 1 }$.  We note that our variational setting
could similarly be used to study the gradient flow dynamics governed
by \eqref{Eeps} (for a related study, see \cite{owen90}). Other
physical effects, however, need to be incorporated to account for some
unusual properties of chiral domain walls such as their tilt in
sufficiently strong external fields \cite{boulle13,mst16}.

Finally, we would like to comment on the assumptions that lead to the
model in \eqref{Eeps}, and on its possible generalizations. As was
already mentioned, this energy functional is local, with the effect of
the stray field surviving in the renormalized magnetocrystalline
anisotropy term only. This is justified in the limit of arbitrarily
thin ferromagnetic films \cite{gioia97}. In practice, this
contribution is only the leading order term in the expansion of the
energy in the film thickness for films whose thickness is less than
the exchange length $\ell_{ex}$ of the material. Going to higher
order, two types of contributions appear. The first is the one coming
from the sample boundary. In the limit of the dimensionless film
thickness $\delta = d / \ell_{ex}$ going to zero, this contribution
becomes local and adds an extra penalty term for the in-plane
component of the magnetization at the edge \cite{kohn05arma}:
\begin{align}
  \label{Ebdry}
  E_\eps^{edge} (\m) = {\delta |\ln \delta| \over 2 \pi}
  \int_{\partial \Omega} (\nu \cdot \mpe)^2 \, d \mathcal H^1(r),
\end{align}
where $\nu$ is the outward unit normal to $\partial\Omega$.  Here we
took into account that in a perpendicular material the magnetic
``charge'' at the sample boundary would be smeared on the scale of
$\ell_{ex}$.  In the interior, the leading order contribution from the
stray field energy beyond the shape anisotropy can be shown to be
\cite{kmn16}:
\begin{align}
  \label{Ebulk}
  E_\eps^{bulk} (\m) = - {\delta \over 8 \pi} \int_\Omega \int_\Omega
  {(\mpa(\mathbf r) - \mpa(\mathbf r'))^2 \over |\mathbf r - \mathbf
  r'|^3} \, d^2 r \, d^2 r' +  {\delta \over 4 \pi}  \int_\Omega
  \int_\Omega {\nabla \cdot \mpe(\mathbf r) \, \nabla \cdot \mpe
  (\mathbf r') \over | \mathbf r - \mathbf r'|} \, d^2 r \, d^2 r'.  
\end{align}
Furthermore, for $\delta = \lambda |\ln \eps|^{-1}$ it was shown in
the case $\kappa = 0$ and periodic boundary conditions in the plane
that as $\eps \to 0$ the effect of the stray field energy is to
renormalize the one-dimensional wall energy to a lower value, as long
as $\lambda < \lambda_c = 2 \pi \sqrt{Q - 1}$ \cite{kmn16}. It is
natural to expect from the results of \cite{kmn16} that, as
$\eps \to 0$, the wall energy for $\kappa > 0$ will become
\begin{align}
  \sigma_{wall} = 4 \sqrt{Q - 1} - \pi \kappa - {2 \lambda \over \pi}.
\end{align}
Similarly, one would expect that in this regime the edge wall energy
$\sigma_{edge}$ would also be renormalized to minimize the sum of the
exchange, anisotropy, DMI energies (all contained in \eqref{Eeps}) and
the stray field energy contributions from \eqref{Ebdry} and
\eqref{Ebulk}. This study is currently underway. At the same time, for
$\lambda > \lambda_c$ one expects spontaneous onset of milti-domain
magnetization patterns and qualitatively new system behavior (for a
recent experimental illustration, see \cite{woo16}).

\paragraph*{Acknowledgements.}

The work of CBM was supported, in part, by NSF via grants DMS-1313687
and DMS-1614948. VS would like to acknowledge support from EPSRC grant
EP/K02390X/1 and Leverhulme grant RPG-2014-226.

\bibliography{../../nonlin,../../mura,../../stat}
\bibliographystyle{plain}

\end{document}